\newtheorem{thm}{Theorem}
\newtheorem{coro}[thm]{Corollary}
\newtheorem{prop}[thm]{Proposition}
\newtheorem{assumption}{Assumption}
\newtheorem{assumptionstar}{Assumption}
\newcounter{rmk}
\newcommand\rmk[1]{\vspace*{1mm} \par \stepcounter{rmk}{\noindent \bf Remark \thermk}. {#1}\vspace*{1mm}}
\definecolor{red}{RGB}{139,0,18}
\definecolor{lightred}{RGB}{186,25,31}
\definecolor{blue}{RGB}{0,56,108}
\definecolor{lightblue}{RGB}{69,100,139}
\renewcommand\emph[1]{{\color{red}\itshape #1}}
\def\t{\mathrm{\scriptscriptstyle T} }
\def\tr{\text{tr}}
\def\sign{\text{sign}}
\def\mR{\mathds{R}}
\def\t{\mathrm{\scriptscriptstyle T} }
\def\argmin{\mathop{\arg\min}}
\def\argmax{\mathop{\arg\max}}
\title{Non-zero block selector: A linear correlation coefficient measure for blocking-selection models}
\author{Weixiong Liang$^\text{a}$ and Yuehan Yang$^\text{b}$
\thanks{Corresponding Author. Email: yyh@cufe.edu.cn.}\\
{\it \small $^\text{a}$Department of Statistics and Data Science, Southern University of Science and Technology,}\\
{\it \small Shenzhen, China}\\
{\it \small $^\text{b}$School of Statistics and Mathematics, Central University of Finance and Economics,}\\
{\it \small Beijing, China}}
\date{}
\begin{document}
\maketitle
\begin{abstract}
Analyzing multiple-group data is crucial in fields such as genomics, finance, and social sciences. This study investigates a block structure that consists of covariate and response groups. It examines the block-selection problem of high-dimensional models with group structures for both responses and covariates, where both the number of blocks and the dimension within each block are allowed to grow larger than the sample size. A novel framework is proposed, comprising a block-selection model and a non-zero block selector (NBS) for identifying and utilizing these structures. We establish the uniform consistency of the NBS and introduce three estimators based on the NBS to enhance modeling efficiency. Theoretical results show that these estimators achieve the oracle property, are consistent, jointly asymptotically normal, and efficient in modeling ultra-high-dimensional data. Extensive simulations under complex data settings demonstrate the superior performance of the proposed method, while an application to gene-expression data further validates its practical utility.
\end{abstract}

\noindent{\bf Keywords:} {Feature selection; High-dimensional model; Blocking effect; Block structure; Multi-response model}

\section{Introduction}
In the context of extensive data and multiple data groups, the identification of relationships between groups and the screening process would deviate from traditional modeling techniques. For example, electronic health data typically contains a massive amount of information received from patients. Symptoms of a disease usually appear in groups and are related to a part of abnormal inspection results. These structures naturally exist in many fields and can be generally determined by relevant facts.

In practical applications such as those involving genomic data, where the dimensions of covariates and responses are typically substantial, block structures always exist, such as the gene-group effects \citep{hummel2008globalancova}, the association between gene groups \citep{park2008penalized, zhang2010new}, and the protein and DNA association detection \citep{zamdborg2009discovery}. To illustrate the practical importance of block selection in the high-dimensional, multi-response, linear model with block structures \eqref{eq model}, we present the following two examples.

\begin{itemize}
\item[] \textbf{Example 1. International stock market}
\item[] International stock market studies often involve several major stock markets globally \citep{liu2022international}. Let $(Y_1,\dots, Y_J)$ represent stock market indices from various countries, such as the SP500 index for the American stock market, and $(X_1,\dots, X_K)$ represent the information of firms, such as the financial figures, from around the world. Indexes (responses) and information of firms (covariates) from the same country or same industry tend to be highly correlated and can be grouped within a single block. Researchers have explored how financial contagion occurs within each block and between different blocks \citep{akhtaruzzaman2021financial,guo2021tail}. In recent studies, there has been a growing interest in understanding how global events, such as the COVID-19 pandemic, impact the interconnectedness network of stock market volatility, with a particular focus on grouping effects \citep{cheng2022impact}. The model \eqref{eq model} accurately captures these practical phenomena.

\item[] \textbf{Example 2. Gene expression and fMRI intensity}
\item[] Genomic data, such as the pathway group structures for gene expressions and brain functional regions for fMRI intensity responses, often exhibit high-dimensional and block structures \citep{zamdborg2009discovery,stein2010voxelwise}. For example, single nucleotide polymorphisms (SNPs) are often grouped into genes, and genes are grouped into biological pathways. Block selection is suitable for biological studies aimed at detecting models in which each trait and covariate are embedded within biological functional groups, such as genes, pathways, or brain regions \citep{li2015multivariate}. In recent years, the analysis of associations between high-dimensional covariates and responses with natural group (block) structures has also been a subject of study in various genomic-data analyses \citep{park2008penalized,zhang2010new}.
\end{itemize}

When performing modeling and prediction with such data as responses and covariates, it is often beneficial to identify the corresponding groups of responses and covariates, which can help improve predictive performance and enhance the interpretation of the fit. To simultaneously describe data with multiple covariate and response groups, we consider models with block structures. That is, we consider the following high-dimensional model with a block structure with respect to both response and covariate groups:
\begin{equation}\label{eq model}
\big( Y_1, \dots, Y_J \big)
= \big( X_1, \dots, X_K \big)
\left( \begin{array}{ccc}
B_{11} & \cdots & B_{1J}\\
\vdots & \ddots & \vdots\\
B_{K1} & \cdots & B_{KJ}
\end{array}\right)
+ \big(E_1, \dots, E_J \big),
\end{equation}
where $Y_j$ and $X_k$ denote the $j$th response group and $k$th covariate group, respectively, and $B_{kj}$ denotes the coefficient matrix of $X_k$ to $Y_j$ such that $B_{kj} = \{\beta^{kj}_{ll'}\}$ and $E_j$ denotes the random error matrix. We denote $\Sigma_j$ the covariance matrix of $E_j$ and assume that it is non-negative definite with positive diagonal elements and the diagonal elements are bounded. $X_k$'s are given and the related covariates will be grouped into the same group. We do not require that different covariate groups are independent.
In this model, each pair $(k,j)$, for $k=1,\dots K$ and $j = 1,\dots,J$ represents a block. We are interested in which blocks are relevant to the model, i.e., the $(k,j)$th block is relevant when at least one element in the block is relevant,  thus defining
$$J_1 = \{(k,j):  \exists~ \beta^{kj}_{ll'} \neq 0 \}.$$
We allow the three dimensions, i.e., the dimensions of blocks, covariates, and responses, to be larger than the number of observations, and assume $|J_1| \ll K \times J$ when the latter becomes large. When $K \times J = 1$, the above model reverses to the regular multi-response, high-dimensional regression model. To fit this model, we should screen the relevant blocks and identify the covariates. We define the screening of relevant blocks in the model as ``block selection''. The blocking effect is similar to but different from the grouping effect because we study the ``blocked variables''. The blocking effect takes into account not only the grouping effect within the covariates but also the grouping effects on both the responses and the covariates. After organizing both the covariates and the responses into groups, the relationships between these grouped covariates that contribute to the grouped responses are depicted as blocks, resulting in the blocking effect.
Unlike the grouping effect, the blocking effect has rarely been studied. In this article, we develop a block selection method for high-dimensional models with group structures for both responses and covariates.

\subsection{Related work}
Early work studying the feature selection problem mainly focuses on group-structured models or multi-response models. Based on the sparsity assumptions, multi-response models require estimating a sparse coefficient matrix, which can be achieved by flexible application of penalized regularizations, such as the lasso \citep{tibshirani1996regression}, adaptive lasso penalty \citep{zou2006adaptive}, elastic net \citep{zou2005regularization}, MCP \citep{zhang2010nearly}, and group lasso \citep{yuan2006model}. Other strategies include sequential approaches such as forward regression in high-dimensional space \citep{wang2009forward}, step regression in high-dimensional and sparse space \citep{ing2011stepwise}, orthogonal matching pursuit\citep{cai2011orthogonal}, and sequential lasso \citep{luo2014sequential}. Furthermore, to obtain an accurate multi-response model, researchers have proposed many targeted methods, such studies including the dimension reduction technique \citep{yuan2007dimension, chen2012sparse}, the extension of lasso \citep{turlach2005simultaneous}, the use of $\ell_{1} / \ell_{2}$ regularization \citep{obozinski2011support}, the investigation of a regularized multivariate regression for identifying master covariates \citep{peng2010regularized}.

Similar to block structures, group structures present many challenges, mainly due to the difficulty in exploring the information within them. Although the above penalized regularizations would perform well in various situations, they may not be able to recover a model with block or group structures because ignoring these structures can result in insufficiency and hurt model inference \citep{dudoit2003multiple, stein2010voxelwise}, particularly when dealing with a large number of covariates and responses.

Two recent studies have successfully addressed the multiple-response model with group structures. First is the multivariate sparse group lasso (MSGLasso) \citep{li2015multivariate}, which proposed a mixed coordinate descent algorithm that utilized group-structure information to determine the descent direction. The second is the Sequential Canonical Correlation Search (SCCS) \citep{luo2020feature}, which considered group-structure information in the canonical correlation to identify non-zero blocks and used the Extended Bayesian Information Criterion (EBIC) for feature selection. However, there is a drawback worth mentioning for both methods: as the dimensions of both covariates and responses increase, they inevitably require excessive computational resources.

Most relevant to our study, \citet{hu2023response} introduced a best-subset selector for response selection. Specifically, they introduced an indicator for the response and proposed a response best-subset selection model $Y\Delta = X\Theta + \mathcal{E} \Delta$. We extend their work to block structures and construct a selector for each coefficient block, rather than solely focusing on the response. The most significant contribution of this extension is that the proposed indicator is well-suited for block selection problems and more extensive data sets. Additionally, the selector for blocks comprising both the response and covariate groups is considerably more intricate than the response selector. This complexity extends to the model itself, with the associated theoretical guarantees and computations still being lacking. Furthermore, the best-subset selector introduced by \citet{hu2023response} is only suitable for low dimensions. Our extension allows it to be applied to larger data sets, where the number of groups and the numbers of responses and covariates within each group can exceed the sample size.

\subsection{Contributions and organization of the paper}
Given the practical need for block selection and the existing research gap in this area, this paper proposes a block-selection model and introduces a novel method called the non-zero block selector (NBS) to address the problem of non-zero block selection. Specifically, we offer three stable and straightforward measures for selecting both blocks and coefficients. The main contributions of this paper are as follows:

First, we propose a new linear correlation coefficient measure for block-structured data and establish its properties. Based on this measure, we introduce an indicator, $\Delta_{kj}$, for each coefficient block, $B_{kj}$, and propose an efficient algorithm to calculate the indicator. This measure allows us to describe the correlation between the covariate group, $X_k$, and the response group, $Y_j$, and directly select the relevant blocks from the data. We provide the uniqueness of the estimator and establish its related asymptotic property.

Second, we provide three effective penalized functions and algorithms for the block-selection model. We discuss various scenarios of the block-selection model, including situations where the covariates have full rank and high-dimensional settings. We also explore whether there is a requirement for sparsity in these situations. For each estimator, we provide uniqueness and asymptotic properties under regular conditions. Most relevant to our work, \cite{luo2020feature} considers similar modeling without providing theoretical guarantees regarding the estimation accuracy. We fill this gap by proving that the proposed estimates achieve the optimal minimax rate for the upper bound of the $l_2$ norm error. In comparison with MSGlasso \citep{li2015multivariate}, we significantly expand the theoretical constraints regarding data dimension without compromising accuracy.

One significant novelty of this NBS selector is its computational efficiency. Dealing with coefficient blocks using traditional methods often involves iterating through all the elements of each coefficient block, such as the SCCS, which can be computationally expensive. The NBS significantly reduces computational costs by identifying irrelevant blocks through direct calculations. We prove that the operator can select each relevant block with a high probability, leveraging block-structure information to accelerate the algorithm and enhance estimation performance.

To evaluate the performance of the NBS, we conduct extensive simulation studies and real data analysis. The results demonstrate that the NBS outperforms existing methods in terms of prediction accuracy and model interpretability. Overall, the NBS provides a useful tool for modeling high-dimensional, multi-response data with block structures, which has wide-ranging applications in various fields, including genomics, finance, and the social sciences.

The remainder of the article is organized as follows. In Section 2, we introduce the model and the NBS. Section 3 provides three effective penalized functions and algorithms for the block-selection model and the theoretical properties. We report our simulation studies for the comparison of the NBS and other methods in Section 4. Section 5 provides an analysis of a real data set. Section 6 concludes the article with a discussion. Appendix A provides additional simulation results, and the proofs are presented in Appendix B.

\section{Model and Indicator}
In this section, we introduce the proposed block-selection model and its associated selection indicator. Subsection 2.1 presents the block model and its integration with the selection indicator. Subsection 2.2 introduces the indicator optimization function and establish some properties of it. Subsection 2.3 discuss the estimation of the indicator optimization function. Subsection 2.4 gives a practical approach to obtain the tuning parameter and provides a straightforward algorithm for obtaining the selector.

We first introduce some notation and framework. Consider a response variable set, $\mathcal{Y} = (\mathcal{Y}_1,\dots, \mathcal{Y}_J)$, and a covariate set, $\mathcal{X} = (\mathcal{X}_1,\dots,\mathcal{X}_K)$, where $\mathcal{Y}_j$ and $\mathcal{X}_k$ represent the $j$th response group and the $k$th covariate group, respectively. Each pair of these groups forms a block, containing $|\mathcal{X}_k| = p_k$ by $|\mathcal{Y}_j| = q_j$ elements, where $\sum_{k=1}^K p_k = P$ and $\sum_{j=1}^J q_j = Q$. The numbers of groups, $K$ and $J$, can exceed the sample size $n$, making both $P$ and $Q$ larger than $n$. Additionally, within each block $(k, j)$, $p_k$ may also exceed $n$. We use $Y_{n \times Q} = (Y_1,\dots, Y_J)$ and $X_{n \times P} = (X_1,\dots,X_K)$ to represent the observations of the $\mathcal{Y}$ and $\mathcal{X}$, respectively. For ease of analysis, we assume that each column of the response matrix, $Y$, and covariate matrix, $X$, are standardized. For the multivariate modeling in \eqref{eq model}, we set $B = \{B_{kj}\}_{K \times J}$, where each $B_{kj}$ denotes the coefficient matrix of $X_k$ and $Y_j$.

\subsection{Block-selection model}
In this part, we introduce the block-selection model and the indicator for coefficient block selection. Specifically, we assume that the model is sparse, meaning that it contains many irrelevant blocks. In other words, all the covariates are irrelevant to the responses in this block. To identify irrelevant blocks and detect relevant ones, we introduce the selection indicator, $\Delta = (\Delta_{kj})_{K \times J}$, where
\begin{equation}\label{eq deltakj original}
\Delta_{kj} = \left\{\begin{array}{ll}
1 & \  \exists~ \beta^{kj}_{ll'} \in B_{kj}, \beta^{kj}_{ll'} \neq 0 \\
0 & \text{otherwise}.
\end{array}\right.
\end{equation}
The above indicator represents whether the block $(k,j)$ is relevant or irrelevant to the model and we have $B_{kj} = B_{kj} \bullet \Delta_{kj}$. For instance, if $\Delta_{23}=1$, it means that $B_{23}$ contains non-zero elements and $\Delta_{23} = 0$ indicates that $B_{23} = 0$. Incorporating the indicator matrix into the model helps us identify the relevant coefficient blocks. Thus, the high-dimensional response and covariate model, \eqref{eq model}, is equivalent to the following block-selection model:
\begin{equation}\label{eq model0}
\big( Y_1, \dots, Y_J \big)
= \big( X_1, \dots, X_K \big)
\left( \begin{array}{ccc}
B_{11} \bullet \Delta_{11} & \dots, & B_{1J} \bullet  \Delta_{1J} \\
\vdots & \ddots & \vdots\\
B_{K1}\bullet  \Delta_{K1} & \dots, & B_{KJ}\bullet  \Delta_{KJ}
\end{array}\right)
+ \big( E_1, \dots, E_J \big),
\end{equation}
where $\bullet$ indicates scalar multiplication. The block-selection model presented above combines the indicator and coefficient matrix for each block. The block-selection model simplifies the original model, \eqref{eq model}, by introducing a selection indicator. Our goal is to estimate this selection indicator, thereby reducing the complexity associated with estimating the coefficient matrix of the model.

We also introduce a concise version of the block-selection model, that is, set
\begin{equation*}
B_\Delta = \left( \begin{array}{ccc}
B_{11} \bullet \Delta_{11} & \dots & B_{1J} \bullet  \Delta_{1J} \\
\vdots & \ddots & \vdots\\
B_{K1}\bullet  \Delta_{K1} & \dots & B_{KJ}\bullet  \Delta_{KJ}
\end{array}\right).
\end{equation*}
Then, \eqref{eq model0} is equivalent to the following equation:
\begin{equation}\label{eq selection model}
Y = X B_{\Delta} + E,
\end{equation}
and for each $j$, we have
\[ Y_j = X B_{\Delta j} + E_j,\]
where $B_{\Delta j} = (B_{1j}\bullet\Delta_{1j}\dots B_{Kj}\bullet\Delta_{Kj})^\t$ relating to $B_j = (B_{1j} \dots B_{Kj})^\t$. The selection model, \eqref{eq selection model}, can be regarded as an extension of the traditional model $Y = X B + E$. Notably, conventional penalized regularization methods face challenges when applied to the model above while maintaining computational efficiency and estimation accuracy. In the following, we delve into the methodology for solving the indicator within the context of the block-selection model.

\subsection{Indicator optimization function}\label{section 2.2}
In this section, we focus on calculating the block-selection indicator by introducing an indicator optimization function. As mentioned earlier, we defined an indicator matrix $\Delta = \{\Delta_{kj}\}_{K \times J}$. According to the block-selection model, determining whether a block, $B_{kj}$, equals 1 or 0 is equivalent to deciding whether the covariate group, $X_k$, is relevant to the response group, $Y_j$, in this model. Based on this definition, we establish the following equivalence:
\begin{equation}\label{eq equivalence}
\Delta_{kj} = 1 ~\rightleftarrows~ B_{kj} \neq 0 ~\rightleftarrows~ \|X_k B_{kj}\|^2_F > 0,
\end{equation}
where $\|\cdot \|_F$ denotes the Frobenius norm. To calculate the value of indicator $\Delta_{kj}$, we consider constructing a correlation function for each pair of covariate and response groups. Recalling that $|\mathcal{X}_k| = p_k$, we first calculate the following oracle least squares estimator for each block $(k,j)$, when $p_k \leqslant n$:
\begin{equation}\label{eq esti 1}
\hat B_{kj} = (X_k^TX_k)^{-1}X_k^T Y_j.
\end{equation}
Since we allow the dimension within the block larger than the number of observations, when $p_k > n$, the inverse of $(X_k^\t X_k)$ may not be unique. We solve this problem by adding the sparse requirement. That is, we consider the inverse of subset $(X^\t_{\hat S_{kj}}X_{\hat S_{kj}})$ where $|\hat S_{kj}| < n$ and the subset $\hat S_{kj}$ can be obtained from the lasso \citep{tibshirani1996regression}, sure independent screening (SIS) \citep{fan2008sure}, etc. That is, when $p_k > n$, we obtain the following estimator for block $(k,j)$:
\begin{equation}\label{eq esti 2}
\hat B_{kj} = (X_{\hat S_{kj}}^\t X_{\hat S_{kj}})^{-1}X_{\hat S_{kj}}^\t  Y_j.
\end{equation}
For notational simplicity, we include the discussion about the difference between the above two estimators, \eqref{eq esti 1} and \eqref{eq esti 2}, in the next section (combined with the joint estimation with or without penalized regularizations) and omit their differences here. In the following, we adhere to the notation and situation in \eqref{eq esti 1} and $p_k \leqslant n$.

We consider proposing an indicator optimization function that calculates the indicator by balancing the values between $\Delta_{kj}$ and the dual transformation, $(1 - \Delta_{kj})$. Based on \eqref{eq equivalence}, equivalently, the optimization function can be obtained by balancing the values between $\|Y_j - X_k \hat B_{kj} \|^2_F $ and $\|X_k\hat B_{kj}\|^2_F$. We also consider the degrees of freedom for both values, which are inspired by the adjusted R squared, as follows:
\[ n-p_k-1 ~\text{and}~ p_k-1,\]
where the former is the degree of $\|Y_{j}- X_k \hat B_{kj}\|_{F}^{2}$ and the latter is the degree of $\|X_k\hat B_{kj}\|_{F}^{2}$.
Combining the degrees, we introduce the following correlation function to the indicator, that is, for the block $(k,j)$:
\begin{equation}\label{eq deltakj}
\hat \Delta_{kj} = \argmin_{\Delta_{kj} \in \mathcal{V}}
\Big(\frac{\left\|Y_{j}-P_{X_{k}} Y_{j}\right\|_{F}^{2}}{n-p_k-1} \Delta_{k j}
+\gamma \frac{\left\|P_{X_{k}} Y_{j}\right\|_{F}^{2}}{p_k -1}\left(1-\Delta_{k j}\right)
\Big),
\end{equation}
where $\mathcal{V} = \{0,1\}$, $\gamma $ is a tuning parameter that will be discussed later and $P_{X_k} = X_k(X_k^\t X_k)^{-1}X_k^\t $ denotes the projection matrix of $X_k$.

To establish properties of the above optimization, we impose the following assumptions for model \eqref{eq model}.

\begin{assumption}{(Low-dimensional setting requirement)}\label{assumption p}
If $p_k <n$, we require $\max_{k}p_k \le c_1 n^\xi$, where $0<\xi<1/2$, and the number of covariate groups satisfies $K \leqslant c_1' n^{\xi/2}$ for some constants $c_1,c_1' \ge 0$.
\end{assumption}
\begin{assumptionstar}{(Sparse requirement)}\label{assumption p'}
If the condition in Assumption \ref{assumption p} does not hold, let $S_{kj} = \{i:\|B_{kj,i \cdot }\|_F \ne 0 \}$, where $B_{kj,i \cdot }$ denotes the $i$th row of $B_{kj}$. Then, $\max_{(k,j)}|S_{kj}| \le c_1 n^\xi$, with $0<\xi<1/2$, and the number of covariate groups satisfies $K \leqslant c_1' n^{\xi/2}$ for some constants $c_1,c_1' \ge 0$.
\end{assumptionstar}
\begin{assumption}\label{assumption Ej}
For any pair $(k,j)$, the random errors $E_j$ are independent of $X_k$. The covariance matrix $\Sigma_j$ of $E_j$ is assumed to be non-negative definite with positive diagonal elements, and these diagonal elements are bounded.
\end{assumption}
\begin{assumption}\label{assumption correlation}
Define $T_{kj} = X_kB_{kj}$ and let $\langle A,B \rangle = \tr(A^\t B)$. For all $k = 1,2,...,K$, there exist positive constaints $c_2,c_3, \eta, \tau  > 0$, such that $\eta + \xi + \tau \le 1$ and the following hold,
\begin{align*}
&\max_{k' \ne k} \lambda_{max}(X_{k'}^\t P_{X_k}X_{k'}) \le c_2 n^{\eta/2}, \\
&\max_{k' \ne k} \langle T_{kj},T_{k'j} \rangle \le c_3 n^{\tau}.
\end{align*}
\end{assumption}
\begin{assumption}\label{assumption sparsity}
The singular values of $B_{kj}$ are bounded, i.e., for some constant $c_4 >0$,
$$\max_{(k,j)} \sum_i |\sigma_i(B_{kj})| \le c_4n^{\eta/2},$$
where $\sigma(\cdot)$ denote the singular value of some matrix.
\end{assumption}
Assumption \ref{assumption p} establishes a low dimensional constraint for cases where the data satisfies this assumption. Specifically, the number of the covariates in each block, $p_k$, is allowed diverge as $n$ grows, at a rate governed by $\xi$, and the number of covariate groups $K$ can also diverge, with at a slower rate of $\xi/2$. This assumption on $p_k$ and $K$ are relatively mild under low dimensional settings. Assumption \ref{assumption p'} addresses situations where Assumption \ref{assumption p} does not hold. In such cases, it allows some $p_k$ values to exceed $n$, but imposes restrictions on the number of relevant covariates, requiring them to satisfies the same requirements in Assumption \ref{assumption p}. This ensures that the estimator \eqref{eq esti 2} retains properties similar to those of \eqref{eq esti 1}.

Assumption \ref{assumption Ej} is a common assumption that bounds the variability of the error terms $E_j$, and it can derive that the eigenvalues of $E_jE_j^\t$ are bounded. Assumption \ref{assumption correlation} imposes a constraint of the correlation between $X_k$ and $X_{k'}$, which rules out scenarios where covariates from different groups exhibit strong correlations. This assumption is similar to some commonly used conditions that address the issue of strong collinearity, such as $\lambda_{max}(\Sigma) \le c_4n^\tau$\citep{fan2008sure}, $\lambda_{max}(\Sigma)/\lambda_{min}(\Sigma) \le c_4n^\tau$\citep{wang2016high}, among others. Assumption \ref{assumption sparsity} establishes a natural requirement for $B_{kj}$ in low dimensional settings and serves as a reasonable sparsity requirement for $B_{kj}$ in high-dimensional settings.

Based on the definition of $T_{kj}$ in Assumption \ref{assumption correlation}, we have
\begin{align*}
E\|X_k \hat B_{kj}\|^2_F &=  E\|P_{X_k}Y_j\|^2_F \nonumber\\
&=\langle T_{kj},T_{kj} \rangle
+ E\langle P_{X_k}E_j,P_{X_k}E_j \rangle
+ \sum\limits_{k',k'' \ne k} \langle P_{X_k}T_{k''j},P_{X_k}T_{k'j} \rangle
+ 2\sum\limits_{k' \ne k} \langle T_{kj},T_{k'j} \rangle.
\end{align*}
When $\Delta_{kj} = 1$, implying $T_{kj} \ne 0$, the terms are bounded as follows:
\begin{align*}
& (1)~ E\langle P_{X_k}E_j,P_{X_k}E_j \rangle = E\sum\limits_{i = 1}^{n} \lambda_i(P_{X_k}E_jE_j^\t)\\
&~~~~~~~~~~~~~~~~~~~~~~~~~~~~\le E\sum\limits_{i = 1}^{n} \lambda_i(P_{X_k})\lambda_i(E_jE_j^\t)
\le E\tr(P_{X_k}) \cdot \max_{i} \lambda_i(E_jE_j^\t) = O(n^\xi), \\
& (2)~ \sum\limits_{k',k'' \ne k} \langle P_{X_k}T_{k''j},P_{X_k}T_{k'j} \rangle
\le (K-1)^2 \cdot \max_{k' \ne k} \langle P_{X_k}T_{k'j},P_{X_k}T_{k'j} \rangle \\
&~~~~~~~~~~~~~~~~~~~~~~~~~~~~~~~~~~~~~~~\le (K-1)^2 \cdot \max_{k' \ne k} \Big[ \sum\limits_{i} \lambda_i(X_{k'}^\t P_{X_k} X_{k'}) \lambda_i(B_{k'j}B_{k'j}^\t) \Big] \\
&~~~~~~~~~~~~~~~~~~~~~~~~~~~~~~~~~~~~~~~ \le (K-1)^2 \cdot c_2 c_4 n^\eta = O(n^{\eta + \xi}),\\
& (3)~ 2\sum\limits_{k' \ne k} \langle T_{kj},T_{k'j} \rangle \le 2(K-1)\max_{k' \ne k} \langle T_{kj},T_{k'j} \rangle
= O(n^{\tau+\xi/2}).
\end{align*}
The last equality of $(1)$ follows directly from Assumptions \ref{assumption p} and \ref{assumption Ej}. Specifically, $\tr(P_{X_k}) \le c_1 n^\xi$ and $\max_{i} \lambda_i(E_jE_j^\t)$ is bounded. For $(2)$ and $(3)$, their respective orders are derived using Assumption \ref{assumption p}, \ref{assumption correlation} and \ref{assumption sparsity}. Thus, the overall order is
$$E\|P_{X_k}Y_j\|^2_F = O(n) + O(n^{\xi}) + O(n^{\eta+\xi}) + O(n^{\tau + \xi/2 }) = O(n).$$
Similarly, when $\Delta_{kj} = 0$, implying $T_{kj} = 0$, we have
$$E\|P_{X_k}Y_j\|^2_F = 0 + O(n^\xi) + O(n^{\eta+\xi}) + 0 = O(n^{\eta+\xi}).$$
The expected squared Frobenius norm of the error term is given by
\begin{align*}
E\|Y_j - X_k \hat B_{kj}\|^2_F
&= E(\|Y_j - P_{X_k} Y_j \|_F^2)  \nonumber \\
&= \sum\limits_{k'\ne k} \langle T_{k'j},T_{k'j} \rangle
+ \sum\limits_{\substack{k'\ne k'' \\ k',k'' \ne k}} \langle T_{k'j},T_{k''j} \rangle
- \sum\limits_{k',k'' \ne k} \langle P_{X_k}T_{k''j},P_{X_k}T_{k'j} \rangle \\
&+ E\langle (I_n-P_{X_k})E_j,(I_n-P_{X_k})E_j \rangle .
\end{align*}
Irrespective of whether $\Delta_{kj} = 1$ or $\Delta_{kj} = 0$, under Assumption \ref{assumption p} $\sim$ \ref{assumption sparsity}, the error term is bounded as
$$E(\|Y_j - P_{X_k} Y_j \|_F^2) = O(n^{1+\xi/2}) + O(n^{\tau+\xi/2+\eta/2}) - O(n^{\eta+\xi}) + O(n-p_k) = O(n^{1+\xi/2}).$$
Hence we have:
\begin{align*}
E\dfrac{\left\|P_{X_{k}} Y_{j}\right\|_{F}^{2}}{p_k - 1} = \begin{cases}
 O(n^{1-\xi}) & \text{ if } \Delta_{kj}=1 \\
 O(n^{\eta}) & \text{ if } \Delta_{kj}=0
\end{cases} \quad , \quad
E\dfrac{\left\|Y_{j}-P_{X_{k}} Y_{j}\right\|_{F}^{2}}{n - p_k-1} = O(n^{\xi/2}).
\end{align*}

\subsection{Performance of the indicator function estimator}
The optimization function \eqref{eq deltakj} provides a straightforward rule for estimating $\Delta_{kj}$:
\[
\hat{\Delta}_{kj} =
\begin{cases}
1 & \text{if } \dfrac{\|Y_{j}-P_{X_{k}} Y_{j}\|_{F}^{2}}{n-p_k-1} \le \gamma \dfrac{\|P_{X_{k}} Y_{j}\|_{F}^{2}}{p_k - 1}, \\
0 & \text{if } \dfrac{\|Y_{j}-P_{X_{k}} Y_{j}\|_{F}^{2}}{n-p_k-1} > \gamma \dfrac{\|P_{X_{k}} Y_{j}\|_{F}^{2}}{p_k - 1}.
\end{cases}
\]
To ensure the accuracy of \eqref{eq deltakj} in estimating $\Delta_{kj}$, it suffices to establish the existence of a constant $\gamma$ such that
\[\max_{\Delta_{kj} = 1} l_{kj} \le \gamma \le \min_{\Delta_{k'j'} = 0} l_{k'j'},\]
where
\[l_{kj} = \frac{\|Y_{j}-P_{X_{k}} Y_{j}\|_{F}^{2}}{n -p_k-1} \bigg/ \frac{\|P_{X_{k}} Y_{j}\|_{F}^{2}}{p_k - 1}.\]
Building on the bounds for the projection and error terms discussed in Section \ref{section 2.2}, the asymptotic order of \(l_{kj}\) is derived as follows:
\begin{prop}\label{prop the order of lkj}
Under Assumptions \ref{assumption p} -- \ref{assumption sparsity}, for any $(k,j)$ and $(k',j')$ such that $\Delta_{kj} = 1$ and $\Delta_{k'j'} = 0$, respectively, we have, with probability tending to $1$:
$$l_{kj} = O(n^{3\xi/2-1})  \quad l_{k'j'} = O(n^{\xi/2 - \eta}),$$
where $0<\xi < 1/2$ and $0<\eta + \xi + \tau \le 1$.
\end{prop}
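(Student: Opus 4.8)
The plan is to treat $l_{kj}$ as a ratio $N_{kj}/D_{kj}$, with $N_{kj}=\|Y_j-P_{X_k}Y_j\|_F^2/(n-p_k-1)$ and $D_{kj}=\|P_{X_k}Y_j\|_F^2/(p_k-1)$, and to upgrade the moment estimates already obtained in Section~\ref{section 2.2} to bounds holding with probability tending to $1$. Those computations give $E N_{kj}=O(n^{\xi/2})$ for every block (irrespective of $\Delta_{kj}$), $E D_{kj}=O(n^{1-\xi})$ when $\Delta_{kj}=1$, and $E D_{kj}=O(n^{\eta})$ when $\Delta_{kj}=0$; moreover the underlying algebra identifies the leading terms---the signal mass $\langle T_{kj},T_{kj}\rangle$ of order $n$, the residual mass of order $n^{1+\xi/2}$, the error-projection term of order $n^{\xi}$, and the inter-group contamination term of order $n^{\eta+\xi}$---so these are effectively two-sided estimates, which is exactly what a lower bound on $D_{kj}$ will need.

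For the numerator, $N_{kj}\ge 0$ and Markov's inequality gives $N_{kj}=O(n^{\xi/2})$ with probability tending to $1$ for a fixed pair; for the version uniform over all $KJ$ blocks, I would split $\|Y_j-P_{X_k}Y_j\|_F^2$ into its deterministic signal part, the quadratic form $\tr\!\big(E_j^\t(I_n-P_{X_k})E_j\big)$, and a signal--noise cross term, and control the stochastic pieces by a concentration inequality for quadratic forms (e.g.\ the Hanson--Wright inequality under a sub-Gaussian tail condition on $E_j$) together with a union bound, which only costs a $\log(KJ)$ factor. The more delicate step is the matching lower bound on $D_{kj}$. When $\Delta_{kj}=1$, write $P_{X_k}Y_j=T_{kj}+\sum_{k'\ne k}P_{X_k}T_{k'j}+P_{X_k}E_j$ and use $\|P_{X_k}Y_j\|_F^2\ge \|T_{kj}\|_F^2-2\big|\langle T_{kj},\,\sum_{k'\ne k}P_{X_k}T_{k'j}+P_{X_k}E_j\rangle\big|$; the leading term $\langle T_{kj},T_{kj}\rangle$ is of order $n$, while the signal--signal cross term is of order $n^{\tau+\xi/2}$, the signal--noise cross term is $O(\sqrt{n})$ up to logarithmic factors, and the dropped squared remainder is nonnegative, so $\|P_{X_k}Y_j\|_F^2=\langle T_{kj},T_{kj}\rangle(1+o(1))$ and $D_{kj}\gtrsim n^{1-\xi}$ with probability tending to $1$. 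When $\Delta_{k'j'}=0$ we have $T_{k'j'}=0$, hence $\|P_{X_{k'}}Y_{j'}\|_F^2=\|\sum_{k''\ne k'}P_{X_{k'}}T_{k''j'}+P_{X_{k'}}E_{j'}\|_F^2$, whose mean is governed by the inter-group term and is of order $n^{\eta+\xi}$; concentrating this quadratic form around its mean and dividing by $p_{k'}-1=O(n^\xi)$ yields $D_{k'j'}\gtrsim n^{\eta}$ with probability tending to $1$.

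Combining these gives, on the intersection of the above high-probability events, $l_{kj}=N_{kj}/D_{kj}=O(n^{\xi/2})/\Omega(n^{1-\xi})=O(n^{3\xi/2-1})$ whenever $\Delta_{kj}=1$ and $l_{k'j'}=O(n^{\xi/2})/\Omega(n^{\eta})=O(n^{\xi/2-\eta})$ whenever $\Delta_{k'j'}=0$; since $\xi<1/2$ the first exponent is negative, and since $\eta+\xi+\tau\le1$ with $\tau>0$ one has $3\xi/2-1<\xi/2-\eta$, so the two orders are genuinely separated (this is precisely what later lets a single threshold $\gamma$ distinguish relevant from irrelevant blocks). I expect the main obstacle to be the lower bound on $D_{kj}$. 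On the one hand it needs $\langle T_{kj},T_{kj}\rangle$ to be of order exactly $n$, which rests on the columns of $X_k$ being well conditioned---standardization together with Assumption~\ref{assumption correlation}---and on a non-degeneracy of $B_{kj}$; on the other hand, for the $\Delta_{k'j'}=0$ case, getting $D_{k'j'}$ of order $n^{\eta}$ rather than merely the order $n^{\xi}/n^{\xi}=1$ coming from the error projection alone requires the inter-group contamination in Assumption~\ref{assumption correlation} to be essentially attained, not just bounded; and finally the error quadratic forms $\tr(E_j^\t P_{X_k}E_j)$ must be controlled uniformly over all $KJ$ blocks, so the crude Markov estimate that suffices for one fixed block must be replaced by an exponential tail bound when $J$ grows polynomially in $n$.
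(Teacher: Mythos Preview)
Your approach is essentially the paper's---indeed, the paper's entire proof of Proposition~\ref{prop the order of lkj} is the sentence ``Proposition~\ref{prop the order of lkj} can be established using the inequalities and bounds derived in Section~\ref{section 2.2}; hence, we omit the detailed proof''---so the ratio $N_{kj}/D_{kj}$ decomposition and the direct use of the moment computations in Section~\ref{section 2.2} are exactly what is intended.

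Where you go beyond the paper is in rigor, and usefully so. The paper treats the expected orders $E D_{kj}=O(n^{1-\xi})$, $E D_{k'j'}=O(n^{\eta})$ as if they were two-sided (exact-order) statements and as if they transferred automatically to high-probability statements; you correctly flag that forming the ratio requires a \emph{lower} bound on $D$, not an upper one, and that passing from expectation to ``with probability tending to $1$'' needs concentration (Markov for the numerator, Hanson--Wright plus a union bound for uniformity over $KJ$ blocks). Your two concerns are genuine: (i) the lower bound $\langle T_{kj},T_{kj}\rangle\asymp n$ is not stated as an assumption---Assumption~\ref{assumption sparsity} only bounds singular values from above---so a non-degeneracy condition on the signal is implicitly being used; and (ii) for $\Delta_{k'j'}=0$, the paper's $O(n^{\eta})$ for $E D_{k'j'}$ is only an \emph{upper} bound on the inter-group contamination, yet the conclusion $l_{k'j'}=O(n^{\xi/2-\eta})$ (and the subsequent argument that ``$l_{k'j'}$ does not decline to $0$'') needs the contamination term to be of that order from below. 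The paper does not address either point. So your proposal is correct and in fact fills in steps the paper leaves tacit; just be aware that fully closing (i) and (ii) would require mild additional hypotheses that the paper does not make explicit.
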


Proposition \ref{prop the order of lkj} can be established using the inequalities and bounds derived in Section \ref{section 2.2}; hence, we omit the detailed proof. Given that $0 < \xi < 1/2$ and $0 < \eta + \xi + \tau \leq 1$, it follows that $\xi/2 - \eta \geq 3\xi/2 - 1$ and $0 \geq 3\xi/2 - 1$. If $\xi/2 - \eta \geq 0$, $l_{kj}$ declines to $0$, whereas $l_{k'j'}$ does not. In this case, there exists a constant $\gamma$ such that $l_{k'j'} > \gamma > l_{kj} \to 0$. On the other hand, if $\xi/2 - \eta \leq 0$, both $l_{kj}$ and $l_{k'j'}$ decline to $0$, but $l_{kj}$ decreases more rapidly. This also implies the existence of a sequence $\gamma$ (dependent on $n$) such that $l_{k'j'} > \gamma > l_{kj}$. Thus, for any specific model, such a constant $\gamma$ can be identified. The following corollary formally states this result:

\begin{coro}\label{coro gamma existence}
Under the same assumptions of Proposition \ref{prop the order of lkj}, for any $(k,j)$ and $(k',j')$ such that $\Delta_{kj} = 1$ and $\Delta_{k'j'} = 0$, there exists a constant $\gamma$ such that, with probability tending to $1$,
$$\max_{\Delta_{kj} = 1} l_{kj} \leq \gamma \leq \min_{\Delta_{k'j'} = 0} l_{k'j'}.$$
\end{coro}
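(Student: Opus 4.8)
The plan is to derive Corollary \ref{coro gamma existence} as an essentially immediate consequence of Proposition \ref{prop the order of lkj}, which already supplies the two-sided rate control on $l_{kj}$: namely $l_{kj} = O(n^{3\xi/2 - 1})$ when $\Delta_{kj} = 1$ and $l_{k'j'} = O(n^{\xi/2 - \eta})$ when $\Delta_{k'j'} = 0$, each holding with probability tending to $1$. First I would record the arithmetic facts that follow from the constraints $0 < \xi < 1/2$ and $0 < \eta + \xi + \tau \le 1$: since $\tau > 0$, we get $\eta \le 1 - \xi - \tau < 1 - \xi$, hence $\xi/2 - \eta > \xi/2 - (1-\xi) = 3\xi/2 - 1$, so the ``zero-block'' exponent strictly dominates the ``nonzero-block'' exponent; and $3\xi/2 - 1 < 0$ because $\xi < 1/2$ forces $3\xi/2 < 3/4 < 1$, so $l_{kj} \to 0$ for every relevant block.

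Next I would take a union bound over the finitely many (but possibly $n$-growing) blocks. Since $K \le c_1' n^{\xi/2}$ and, implicitly, $J$ is polynomially bounded in $n$ as well, there are at most $KJ = O(n^{c})$ blocks for some fixed $c$; provided the per-block probability in Proposition \ref{prop the order of lkj} converges to $1$ fast enough (polynomially), the uniform statements $\max_{\Delta_{kj}=1} l_{kj} = O(n^{3\xi/2-1})$ and $\min_{\Delta_{k'j'}=0} l_{k'j'} \gtrsim n^{\xi/2-\eta}$ (with a matching lower bound coming from the leading-term analysis of $\|P_{X_k}Y_j\|_F^2$ and $\|Y_j - P_{X_k}Y_j\|_F^2$ in Section \ref{section 2.2}) hold simultaneously with probability tending to $1$. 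On this event I would then split into the two regimes flagged in the text. If $\xi/2 - \eta \ge 0$, then $\min_{\Delta_{k'j'}=0} l_{k'j'}$ is bounded away from $0$ while $\max_{\Delta_{kj}=1} l_{kj} \to 0$, so any fixed constant $\gamma$ in the resulting gap — for instance $\gamma$ equal to half a constant lower bound for $l_{k'j'}$ — works for all large $n$. If $\xi/2 - \eta < 0$, both sides tend to $0$ but with a genuine polynomial gap of order $n^{(\xi/2-\eta)-(3\xi/2-1)} = n^{1-\xi-\eta} \to \infty$; here I would exhibit $\gamma = \gamma_n$ of intermediate order, say $\gamma_n \asymp n^{(\xi/2-\eta + 3\xi/2 - 1)/2}$, which exceeds $\max_{\Delta_{kj}=1} l_{kj}$ and is below $\min_{\Delta_{k'j'}=0} l_{k'j'}$ with probability tending to $1$.

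The one genuine subtlety — and the place I would be most careful — is whether the statement really delivers a \emph{constant} $\gamma$ (independent of $n$) or merely a sequence $\gamma_n$; the corollary as phrased says ``constant $\gamma$,'' yet the preceding discussion itself concedes that in the regime $\xi/2 - \eta < 0$ one only gets a sequence. So the honest version of the argument establishes: a fixed $\gamma$ when $\eta \le \xi/2$, and an $n$-dependent $\gamma_n$ otherwise, in both cases satisfying the displayed sandwich with probability $\to 1$; I would phrase the write-up to match whichever reading the authors intend. A second minor point worth making explicit is that the lower bound on $\min_{\Delta_{k'j'}=0} l_{k'j'}$ is not literally contained in Proposition \ref{prop the order of lkj} (which states only an upper bound $O(n^{\xi/2-\eta})$); to run the sandwich one needs $l_{k'j'} \gtrsim n^{\xi/2 - \eta}$ as well, which comes from the exact-order computations $E\|P_{X_k}Y_j\|_F^2 = \Theta(n^{\eta + \xi})$ (for $\Delta_{k'j'}=0$) and $E\|Y_j - P_{X_k}Y_j\|_F^2 = \Theta(n^{1+\xi/2})$ sketched in Section \ref{section 2.2}, together with a concentration argument; I would note that this matching lower bound is implicit in that section and cite it accordingly. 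Everything else is the elementary exponent bookkeeping above.
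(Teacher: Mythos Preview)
Your proposal is correct and follows essentially the same route as the paper: the paper's ``proof'' is the short paragraph immediately preceding the corollary, which invokes Proposition~\ref{prop the order of lkj}, compares the exponents $3\xi/2-1$ and $\xi/2-\eta$, and splits into the same two regimes $\xi/2-\eta\geq 0$ and $\xi/2-\eta<0$ that you describe. Your two caveats are well taken and in fact sharper than the paper's own treatment: the paper explicitly concedes (in that paragraph and again in the proof of Theorem~\ref{thm selection consistency}) that in the second regime one only obtains a sequence $\gamma(n)$ rather than a fixed constant, and it never isolates the matching lower bound $l_{k'j'}\gtrsim n^{\xi/2-\eta}$ you correctly flag as needed --- it simply treats the $O(\cdot)$ rates from Proposition~\ref{prop the order of lkj} as exact orders, relying implicitly on the Section~\ref{section 2.2} computations.
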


We integrate all blocks and define the following correlation penalized function to select the indicator for the complete block-selection model:
\[Q( \Delta )=
\frac{1}{KJ} \sum_{j=1}^{J} \sum_{k=1}^{K}
\Big(  \frac{1}{n-p_k-1} \left\|Y_{j}-P_{X_{k}} Y_{j}\right\|_{F}^{2} \Delta_{k j}
+\gamma \frac{1}{p_k - 1}\left\|P_{X_{k}} Y_{j}\right\|_{F}^{2}\left(1-\Delta_{k j}\right)
\Big),\]
where $\gamma$ is a tuning parameter. The indicator matrix $\hat \Delta = \{\hat \Delta_{kj}\}_{K \times J}$ is estimated as follows:
\begin{equation}\label{eq Delta}
\hat \Delta =\underset{\Delta \in \mathcal{V}^{K \times J}}{\operatorname{argmin}}~Q(\Delta).
\end{equation}

In the following, we provide the theoretical guarantees for the indicator. First, we introduce the following proposition to ensure the uniqueness of the estimate.

\begin{prop}\label{prop delta}
Given the projection matrix of $X_k$, denoted as $P_{X_k}$ where $k = 1,\dots, K$, the solution of the 0-1 integer optimization problem \eqref{eq Delta} is unique.
\end{prop}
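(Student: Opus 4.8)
## Proof Proposal for Proposition \ref{prop delta}

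The plan is to exploit the \emph{separability} of the objective across the $(k,j)$ blocks. Writing
$$Q(\Delta) = \frac{1}{KJ}\sum_{j=1}^{J}\sum_{k=1}^{K} q_{kj}(\Delta_{kj}), \qquad q_{kj}(t) = a_{kj}\, t + b_{kj}\,(1-t),$$
with $a_{kj} = \frac{1}{n-p_k-1}\|Y_j - P_{X_k}Y_j\|_F^2$ and $b_{kj} = \frac{\gamma}{p_k-1}\|P_{X_k}Y_j\|_F^2$, one sees that $Q$ is a sum of terms each depending on a single coordinate $\Delta_{kj}$. Consequently
$$\min_{\Delta \in \mathcal{V}^{K\times J}} Q(\Delta) = \frac{1}{KJ}\sum_{j,k}\ \min_{t\in\{0,1\}} q_{kj}(t),$$
and any global minimizer $\hat\Delta$ must minimize each $q_{kj}$ over $\{0,1\}$ separately. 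Hence it suffices to show that each of the $KJ$ scalar subproblems $\min_{t\in\{0,1\}} q_{kj}(t)$ has a unique solution.

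For the scalar subproblem, $q_{kj}$ is affine in $t$, so $q_{kj}(1)=a_{kj}$ and $q_{kj}(0)=b_{kj}$; the minimizer over $\{0,1\}$ equals $1$ when $a_{kj}<b_{kj}$, equals $0$ when $a_{kj}>b_{kj}$, and is non-unique precisely when $a_{kj}=b_{kj}$. Given the projection matrices $P_{X_k}$ (equivalently, given $X$), and using that $I_n-P_{X_k}$ and $P_{X_k}$ are symmetric idempotents, the tie condition $a_{kj}=b_{kj}$ reduces to the single polynomial equation
$$\frac{p_k-1}{n-p_k-1}\,\big\langle (I_n-P_{X_k})Y_j,\,Y_j\big\rangle \;=\; \gamma\,\big\langle P_{X_k}Y_j,\,Y_j\big\rangle$$
in the entries of $Y_j$. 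Since $0<p_k<n$ (Assumption \ref{assumption p}), the matrices $I_n-P_{X_k}$ and $P_{X_k}$ are linearly independent, so the two quadratic forms above are not proportional and this equation is not an identity; its solution set is therefore a proper algebraic subvariety of the sample space, hence of Lebesgue measure zero. Under the non-degeneracy of the error distribution in Assumption \ref{assumption Ej}, $Y_j$ is absolutely continuous on its support, so the event $\{a_{kj}=b_{kj}\}$ has probability zero; a union bound over the finitely many $(k,j)$ then shows that, with probability one, no ties occur and every scalar subproblem has a unique minimizer.

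Combining the two steps gives that, with probability one, each coordinate subproblem is uniquely solved, so $\hat\Delta$ is unique. The main (essentially only) obstacle is the tie case $a_{kj}=b_{kj}$: without ruling it out the statement fails for adversarially chosen data, so the argument must establish that this event is non-generic, i.e., that the two quadratic forms are not proportional and that $Y_j$ assigns no mass to the resulting measure-zero set. If a purely deterministic reading is preferred, the proposition should be understood as "the solution is unique whenever $a_{kj}\neq b_{kj}$ for all $(k,j)$," and the separability argument of the first paragraph then yields uniqueness at once; moreover, in the regime of Corollary \ref{coro gamma existence} the strict inequalities $a_{kj}<b_{kj}$ (for true $\Delta_{kj}=1$) and $a_{kj}>b_{kj}$ (for true $\Delta_{kj}=0$) hold with probability tending to one, so the estimator is asymptotically well defined regardless.
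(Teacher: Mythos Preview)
Your argument is correct and, in fact, more careful than the paper's own proof. Both you and the paper exploit the separability of $Q(\Delta)$ into independent scalar subproblems, but the emphasis differs. The paper's proof concentrates on showing that the projection matrix $P_{X_k}$ itself is well defined and unique---either via the closed form $X_k(X_k^\t X_k)^{-1}X_k^\t$ when $p_k\le n$, or via the uniqueness of the lasso active set $\hat S_{kj}$ when $p_k>n$---and then simply asserts that ``the uniqueness of $\Delta$ is directly obtained when all the minimizers of \eqref{eq minimizer} are unique,'' without ever addressing the tie case $a_{kj}=b_{kj}$. In other words, the paper treats the proposition largely as a statement about the well-posedness of the optimization problem once $P_{X_k}$ is fixed, and leaves the non-degeneracy of each scalar comparison implicit.

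Your route, by contrast, takes the projection matrices as given (which is what the hypothesis actually says) and confronts the one genuine obstruction to uniqueness: the possibility that $a_{kj}=b_{kj}$ for some block. Your measure-zero argument via absolute continuity of $Y_j$ is a legitimate way to dispose of this, and you are right to flag that a purely deterministic reading would require the caveat $a_{kj}\ne b_{kj}$. What you gain is a proof that actually closes the logical gap; what the paper's version buys is the complementary point---relevant in the $p_k>n$ regime---that $P_{X_k}$ is itself unambiguous, which your argument simply assumes as part of the hypothesis.
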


Using Proposition \ref{prop the order of lkj}, we derive the following result on selection consistency.
\begin{thm}\label{thm selection consistency}
Assume Assumptions \ref{assumption p}$ - $\ref{assumption sparsity} holds, with a  properly choose $\gamma$,
the selector is uniformly consistent. Specifically, as $n \rightarrow \infty$, with probability tends to $1$,
\[\hat \Delta  = \Delta.\]
\end{thm}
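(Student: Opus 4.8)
The plan is to combine the pointwise order estimates from Proposition \ref{prop the order of lkj} with a union bound over all $KJ$ blocks, exploiting the slow growth of $K$ and $J$ permitted by Assumptions \ref{assumption p}--\ref{assumption p'}. First I would observe that, by Proposition \ref{prop delta}, the minimizer $\hat\Delta$ of $Q(\Delta)$ is unique and separable across blocks: since $Q(\Delta)$ is a sum of terms each depending on a single $\Delta_{kj}\in\{0,1\}$, the joint minimizer coincides with the blockwise minimizer, so $\hat\Delta_{kj}=1$ iff $l_{kj}\le\gamma$ in the notation of the excerpt. Hence $\{\hat\Delta=\Delta\}$ fails exactly when some relevant block has $l_{kj}>\gamma$ or some irrelevant block has $l_{k'j'}\le\gamma$, and it suffices to control the probabilities of these two events uniformly.

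Next I would make the probabilistic statement behind Proposition \ref{prop the order of lkj} quantitative. The numerator $\|Y_j-P_{X_k}Y_j\|_F^2$ and denominator $\|P_{X_k}Y_j\|_F^2$ are quadratic forms in the Gaussian-type error $E_j$ (plus deterministic signal terms $T_{kj}$), so for each block one gets concentration of $l_{kj}$ around its order-of-magnitude value with exponentially small deviation probability — say $\exp(-c\,n^{\delta})$ for some $\delta>0$ depending on $\xi,\eta,\tau$. I would then take a union bound over the at most $KJ\le c_1'{}^2 n^{\xi}$ blocks (using Assumption \ref{assumption p}); since $n^{\xi}\exp(-c\,n^{\delta})\to 0$, the event
\[
\Big\{\max_{\Delta_{kj}=1} l_{kj}\le \gamma \le \min_{\Delta_{k'j'}=0} l_{k'j'}\Big\}
\]
holds with probability tending to $1$. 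On this event the blockwise decision rule recovers every $\Delta_{kj}$ correctly, so $\hat\Delta=\Delta$. The choice of $\gamma$ is exactly the one furnished by Corollary \ref{coro gamma existence}: if $\xi/2-\eta\ge 0$ it can be taken as a fixed constant strictly between $0$ and $\liminf \min_{\Delta=0} l_{k'j'}$; otherwise it is a vanishing sequence $\gamma_n$ sandwiched between the two orders $n^{3\xi/2-1}\ll\gamma_n\ll n^{\xi/2-\eta}$, which is feasible precisely because $3\xi/2-1<\xi/2-\eta$ follows from $\eta+\xi+\tau\le 1$.

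The main obstacle is upgrading the ``$O(\cdot)$ with probability tending to $1$'' statements of Proposition \ref{prop the order of lkj} — which only assert tightness blockwise — into deviation bounds strong enough to survive the union bound over a growing number of blocks. Concretely, one needs tail bounds for the ratio $l_{kj}$ that decay faster than $1/(KJ)$; the denominator is the delicate piece, because when $\Delta_{kj}=0$ one must show $\|P_{X_k}Y_j\|_F^2$ does not get anomalously small (it is then purely a noise-plus-cross-signal term of order $n^{\eta+\xi}$, and a lower-tail bound on a quadratic form is required), while when $\Delta_{kj}=1$ one must show the signal term $\langle T_{kj},T_{kj}\rangle$ genuinely dominates and is not cancelled by the negative cross terms. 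I would handle the numerator and the $\Delta=1$ denominator via standard Gaussian chaos / Hanson–Wright concentration for $\tr(P_{X_k}E_jE_j^\t)$ and its analogues, and handle the $\Delta=0$ denominator by a matching lower deviation inequality, throughout invoking Assumptions \ref{assumption Ej}--\ref{assumption sparsity} to keep the signal and cross terms at the orders already computed in Section \ref{section 2.2}. A secondary point to address carefully is the high-dimensional regime $p_k>n$: there one uses the estimator \eqref{eq esti 2} with the screened set $\hat S_{kj}$, and one must argue (via Assumption \ref{assumption p'} and sure-screening properties of lasso/SIS) that the relevant rows are retained with probability tending to $1$, so that the same order estimates and hence the same union-bound argument go through with $P_{X_k}$ replaced by $P_{X_{\hat S_{kj}}}$.
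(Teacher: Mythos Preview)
Your proposal is correct and follows essentially the same strategy as the paper: reduce to the blockwise decision rule $\hat\Delta_{kj}=1\Leftrightarrow l_{kj}\le\gamma$, invoke Proposition~\ref{prop the order of lkj} to separate the orders of $l_{kj}$ on $J_1$ versus $J_0$, and then pick $\gamma$ in the resulting gap exactly as in Corollary~\ref{coro gamma existence}. The paper's own proof is actually terser than yours---it simply cites Proposition~\ref{prop the order of lkj}, observes that $(\xi/2-\eta)-(3\xi/2-1)=1-\xi-\eta>0$ so that $\min_{J_0} l_{k'j'}/\max_{J_1} l_{kj}\to\infty$, and concludes, without spelling out the concentration/union-bound step or the $p_k>n$ case that you (reasonably) flag as needing care.
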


\rmk{The degrees of freedom for both components play a critical role in ensuring the theoretical guarantees of the correlation-penalized function. Leveraging these degrees of freedom, the correlation function requires only a single tuning parameter, markedly reducing the number of parameters compared to other related methods. A similar perspective has been discussed in \cite{hu2023response}.}

\rmk{When the dimension of covariates within each group is lower than $n$, the results hold under Assumption \ref{assumption p}. However, if the dimension of covariates within some groups exceeds $n$, Assumption \ref{assumption p} no longer applies, and Assumption \ref{assumption p'} is required instead. In such cases, the coefficient estimator within the block should be computed using \eqref{eq esti 2} instead of \eqref{eq esti 1}. The results remain valid provided the active set $\hat S$ is appropriately specified. The properties of the estimator in \eqref{eq esti 2} are discussed in the subsequent section.}

The proposed indicator, along with its associated properties, such as Proposition \ref{prop the order of lkj} and Corollary \ref{coro gamma existence}, demonstrates that for sufficiently large $n$, a constant $\gamma$ can be visually determined such that $l_{kj} < \gamma < l_{k'j'}$. Furthermore, Theorem \ref{thm selection consistency} establishes that the estimator consistently identifies the true active set, ensuring reliable block selection. This approach effectively differentiates relevant blocks from irrelevant ones, considerably reducing computational cost while enabling the identification of relevant blocks by distinguishing the values of the selection function. Furthermore, the non-zero block-indicator detection rule provides a reliable and stable mechanism for identifying non-zero blocks. Considering these aspects, we preliminarily conclude that the correlation-selection function offers an accurate method for computing the selection indicator and facilitating block selection within the model. In the following section, we will delve into the calculation of the tuning parameter $\gamma$.

\subsection{Determination of $\gamma$}\label{section gamma}
The above section provides a selection indicator with a simple rule for block selection. To further discuss the determination of the tuning parameter and provide the calculation for the indicator, we transform \eqref{eq deltakj} into the following function:
\begin{equation}\label{eq deltakj a}
\hat{\Delta}_{kj}=\left\{\begin{array}{ll}
1 & \text { if } \bar{R}_{kj}^2 > 1-\gamma, \\
0 & \text{otherwise},
\end{array}\right.
\end{equation}
where
\[\bar{R}_{kj}^2 \triangleq 1-l_{kj} = 1 - \frac{p_k - 1}{n-p_k-1} \cdot \frac{{\left\|Y_{j}-P_{X_{k}} Y_{j}\right\|_F^{2}}}{{\left\|P_{X_{k}} Y_{j}\right\|_{F}^{2}}}.\]
The determination of $1 - \gamma$ plays an important role in this indicator estimation. For notational simplicity, in this part we denote $c = 1-\gamma$ where $c$ refers to the partition threshold of $\{\bar{R}_{kj}^2:\Delta_{kj} = 1\}$ and $\{\bar{R}_{kj}^2:\Delta_{kj} = 0\}$, which is similar to the partition plane in support vector machines. We aim to choose $c$ such that
\[\begin{array}{ll}
\lim\limits_{n \to \infty} P(\bar{R}_{kj}^2 > c|\Delta_{kj} = 1) \to 1 \\
\lim\limits_{n \to \infty} P(\bar{R}_{kj}^2 \leq c|\Delta_{kj} = 0) \to 1.
\end{array} \]
Set $J_1 = \{(k,j): \Delta_{kj} = 1\}$ and $J_0 = \{ (k,j): \Delta_{kj} = 0 \}$. The related estimates are $\hat J_1$ and $\hat J_0$. Following the idea of the double-thresholding filter from \citep{geng2023large}, we first define two measures, recall rate and error rate (ER), as follows:
\begin{align*}
\text{Recall}(c) &= \frac{|\hat J_1(c) \cap J_1|}{|J_1|},~~  \text{ER}(c) = \frac{|\hat J_1(c)\cap J_0|}{|\hat J_1(c)|} = \frac{|I(c)|}{|\hat J_1(c)|}
\end{align*}
where
\[ \hat J_1(c) = \{(k,j):\hat\Delta_{kj} = 1\} = \{(k,j):\bar{R}_{kj}^2 > c\}, ~\text{and}~ I(c) = \hat J_1(c)\cap J_0.\]
$\hat J_1(c)$ denotes the estimated active set determined by the constant $c$. $J_1$ and $J_0$ denote the true relevant set for blocks and complementary set, respectively. $I(c)$ denotes the Type I error set. To control the recall rate and error rate and obtain the optimal tuning parameter, the following function is presented:
\[\begin{array}{ll}
c = \argmax \text{Recall}(c) \\
\text{s.t.} ~ {\text{ER}}(c)  \leq \alpha,
\end{array} \]
where $\alpha$ is the significant level that usually assumes a value at 0.05 or 0.1. Note that although $J_1$ is unknown, it is deterministic, and the following holds:
$$\argmax \text{Recall}(c) = \argmax \frac{|\hat J_1(c) \cap J_1|}{|J_1|} = \argmax |\hat J_1(c) \cap J_1|.$$
Since $|\hat J_1(c) \cap J_1|$ is monotonically non-increasing in $c$, set $c_0 = \inf \{c:\text{ER}(c) \le \alpha  \}$ and it is the solution of the above optimization function. To properly estimate $\text{ER}(c) = |I(c)|/|\hat J_1(c)|$ and find $c_0$, we first need to estimate $I(c)$, under an additional assumption, we provide theoretical guarantees on the determination of $\gamma$.

\begin{assumption}\label{assumption enhanced symmetry}
When $\Delta_{kj} = 0$, denote $c = (p_k - 1)/(n - p_k - 1)$, and the following terms have asymptotically symmetric distributions with mean 0:
\begin{align*}
&\|P_{X_k}^\perp(Y_j - E_j)\|_F^2 - c \cdot \|P_{X_k}(Y_j - E_j)\|_F^2, \\
&\langle P_{X_k}(Y_j - E_j), P_{X_k}E_j \rangle - c \cdot \langle P_{X_k}^\perp(Y_j - E_j), P_{X_k}^\perp E_j \rangle, \\
&\langle P_{X_k}E_j, P_{X_k}E_j \rangle - c \cdot \langle P_{X_k}^\perp E_j, P_{X_k}^\perp E_j \rangle.
\end{align*}
\end{assumption}

Assumption \ref{assumption enhanced symmetry} models the behavior of quadratic forms involving projections onto $P_{X_k}$ and $P_{X_k}^\perp$. The first term, which compares the squared Frobenius norms, is scaled by a factor $c = \frac{p_k - 1}{n - p_k - 1}$, accounting for the imbalance in degrees of freedom between the two subspaces. The second and third terms assume that the differences between the inner products of the signal and noise projections in the two subspaces are asymptotically symmetric with mean zero. This holds when $E_j$ is uncorrelated with the true signal $Y_j - E_j$, implying a uniform distribution of projections.

Hutchinson’s trick \citep{hutchinson1989stochastic} supports this symmetry assumption, as it shows that quadratic forms involving $E_j$ and $P_{X_k}$ exhibit symmetric behavior under high-dimensional central limit theorems. Specifically, the expectations are:
\[
E(\langle P_{X_k}E_j, P_{X_k}E_j \rangle) = q_j p_k, \quad E(\langle P_{X_k}^\perp E_j, P_{X_k}^\perp E_j \rangle) = q_j (n - p_k),
\]
indicating they are centered around these values. For large $n$, fluctuations diminish, supporting the assumption.
The assumption captures the behavior of projections in high-dimensional settings, with $E_j$ assumed to be uncorrelated with the signal. If $E_j$ exhibits dependencies, further validation would be needed.

\begin{prop}\label{prop gamma}
Under Assumptions \ref{assumption p} to \ref{assumption enhanced symmetry},
and for $0< \gamma < 1$, i.e., $ 0<c = 1-\gamma<1$, the following holds with high probability, :
$$ \mathbb{P}\{(k,j)\in I(c)\}
\le \mathbb{P}\{ \bar R_{kj}^2 < 2c/(c-1) \}
\approx \frac{|\{(k,j):\bar R_{kj}^2 < 2c/(c-1)\}|}{KJ}.$$
\end{prop}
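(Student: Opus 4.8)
\noindent\emph{Proof strategy.} The plan is to condition on $\Delta_{kj}$, reduce the false--positive event to a one--sided inequality between the two projection norms $\|P_{X_k}Y_j\|_F^2$ and $\|Y_j-P_{X_k}Y_j\|_F^2$, then use the symmetry granted by Assumption \ref{assumption enhanced symmetry} to trade the upper tail of $\bar R_{kj}^2$ for a lower tail --- the ratio structure of $\bar R_{kj}^2$ being what converts the threshold $c$ into $2c/(c-1)$ --- and finally supply the ``$\approx$'' by a concentration argument. If $\Delta_{kj}=1$ then $(k,j)\notin J_0$, so $\mathbb{P}\{(k,j)\in I(c)\}=0$ and the first inequality is vacuous; hence fix $(k,j)$ with $\Delta_{kj}=0$, for which, by the rule \eqref{eq deltakj a}, the event $(k,j)\in I(c)$ is exactly $\{\bar R_{kj}^2>c\}$. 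Writing $\rho=(p_k-1)/(n-p_k-1)$, $U=\|P_{X_k}Y_j\|_F^2$ and $V=\|Y_j-P_{X_k}Y_j\|_F^2=\|P_{X_k}^\perp Y_j\|_F^2$, and using $U>0$, $0<c<1$ and the identity $1-2c/(c-1)=(c+1)/(1-c)$, one clears denominators to obtain
\[
\{(k,j)\in I(c)\}=\{(1-c)\,U>\rho\,V\},\qquad
\{\bar R_{kj}^2<2c/(c-1)\}=\{(1-c)\rho\,V>(c+1)\,U\},
\]
so it suffices to compare the probabilities of two linear events in $(U,V)$.

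The next step substitutes $Y_j=(Y_j-E_j)+E_j$ and expands $U$ and $V$; under $\Delta_{kj}=0$ each splits into a signal--signal term, a signal--noise cross term, and a noise--noise term. Forming $\rho V-U$ collapses this into a deterministic part equal, up to the first discrepancy of Assumption \ref{assumption enhanced symmetry}, to $-(1-\rho^2)\|P_{X_k}(Y_j-E_j)\|_F^2$ --- whose order is controlled exactly as in Section \ref{section 2.2} by Assumptions \ref{assumption correlation}--\ref{assumption sparsity} --- plus a random part $-2Z_2-Z_3$, where $Z_2$ and $Z_3$ are precisely the second and third quantities that Assumption \ref{assumption enhanced symmetry} declares asymptotically symmetric about $0$. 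Rewriting the two linear events above in terms of the single symmetric statistic $W:=2Z_2+Z_3$ (after absorbing the lower--order signal--noise pieces) puts $\{(k,j)\in I(c)\}$ in the form $\{W>t_1\}$ and $\{\bar R_{kj}^2<2c/(c-1)\}$ in the form $\{W<-t_2\}$ for deterministic thresholds $t_1,t_2$ explicit in $\|P_{X_k}(Y_j-E_j)\|_F^2$, $q_jp_k$, $\rho$, $c$; once the threshold inequality $t_2\ge t_1$ is verified, the asymptotic symmetry of $W$ gives $\mathbb{P}\{W>t_1\}=\mathbb{P}\{W<-t_1\}\le\mathbb{P}\{W<-t_2\}$, i.e.\ $\mathbb{P}\{(k,j)\in I(c)\}\le\mathbb{P}\{\bar R_{kj}^2<2c/(c-1)\}$.

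For the ``$\approx$'', note $2c/(c-1)<0$ while, by Proposition \ref{prop the order of lkj} and Corollary \ref{coro gamma existence}, $\bar R_{kj}^2$ stays bounded away from negative values (indeed tends to $1$) for every $(k,j)$ with $\Delta_{kj}=1$; hence with probability tending to $1$ the $|J_1|$ relevant blocks contribute nothing to $|\{(k,j):\bar R_{kj}^2<2c/(c-1)\}|$, which is then, up to $o(KJ)$, a sum over the $|J_0|$ irrelevant blocks of indicators with common marginal probability $\mathbb{P}\{\bar R_{kj}^2<2c/(c-1)\}$. Since $|J_0|/(KJ)\to1$ and the $\bar R_{kj}^2$ concentrate, a law--of--large--numbers argument yields $|\{(k,j):\bar R_{kj}^2<2c/(c-1)\}|/(KJ)\to\mathbb{P}\{\bar R_{kj}^2<2c/(c-1)\}$ with high probability, closing the chain.

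I expect the threshold inequality $t_2\ge t_1$ to be the crux: because $\bar R_{kj}^2=1-\rho V/U$ is a ratio rather than a linear functional of the projection norms, the symmetry of Assumption \ref{assumption enhanced symmetry} lives only at the level of the additive signal/noise components and must be carried through this nonlinearity, and that transport is exactly what both produces the asymmetric bound $2c/(c-1)$ and fixes the direction of the inequality. A closely related and essential difficulty is to show that, under $\eta+\xi+\tau\le1$, the deterministic term $\|P_{X_k}(Y_j-E_j)\|_F^2$ and the signal--noise cross terms are of small enough order (via Assumptions \ref{assumption correlation}--\ref{assumption sparsity}) not to overwhelm the symmetric fluctuations of $W$.
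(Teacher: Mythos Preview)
Your approach is genuinely different from the paper's and more laborious. The paper never linearizes the two tail events in $(U,V)$; instead it writes $\bar R_{kj}^2$ as a sum of three fractions with the \emph{same} positive denominator $(n-p_k-1)\|P_{X_k}Y_j\|_F^2$, whose numerators are exactly the three quantities in Assumption~\ref{assumption enhanced symmetry} multiplied by $(n-p_k-1)$. Since each numerator is (asymptotically) symmetric about $0$ and the denominator is common and positive, $\bar R_{kj}^2$ itself is argued to be approximately symmetric about $0$ under $\Delta_{kj}=0$, giving immediately $\mathbb P(\bar R_{kj}^2>c)\approx\mathbb P(\bar R_{kj}^2<-c)$. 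The passage from $-c$ to $2c/(c-1)$ is then a two-line algebraic slack step done in the $l_{kj}$ variable: $\mathbb P(\bar R_{kj}^2<-c)=\mathbb P(l_{kj}>2-\gamma)$, followed by the comparison $\mathbb P(l_{kj}>2-\gamma)\le\mathbb P(l_{kj}>(2-\gamma)/\gamma)=\mathbb P(\bar R_{kj}^2<2c/(c-1))$. No threshold inequality of the form $t_2\ge t_1$ ever appears.

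Your route, by contrast, clears denominators, isolates the symmetric scalar $W=2Z_2+Z_3$, and must then show that the two linear events in $(U,V)$ can be rewritten as $\{W>t_1\}$ and $\{W<-t_2\}$ with $t_2\ge t_1$. This is precisely where your proposal stalls: you correctly flag the threshold comparison as ``the crux'' but do not verify it, and since $U$ itself contains the random pieces $2a_2+a_3$ that feed into $W$, the reduction to deterministic thresholds requires the ``absorbing lower-order pieces'' step that you leave unjustified. The paper sidesteps this entirely by keeping the common denominator intact, so symmetry transfers to $\bar R_{kj}^2$ without any threshold bookkeeping. What your approach would buy, if completed, is a more honest accounting of how the ratio nonlinearity interacts with the additive symmetry of Assumption~\ref{assumption enhanced symmetry}; the paper's argument is shorter but treats that interaction heuristically. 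Your treatment of the ``$\approx$'' via a law-of-large-numbers argument over the $|J_0|$ null blocks is more explicit than the paper's, which simply asserts the empirical-frequency approximation without comment.
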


Proposition \ref{prop gamma} is inspired by the double-thresholding filter framework \citep{geng2023large}. It demonstrates that, under the assumptions, the probability of a pair $(k,j)$ belonging to the set $I(c)$ is bounded by the probability of $\bar{R}_{kj}^2$ being below a threshold determined by $c$. While Assumption \ref{assumption enhanced symmetry} may not always strictly hold, the proposition remains valid, as it uses a looser upper bound to estimate the probability of the first error. This ensures that the strategy for selecting $\gamma$ is robust, even when the assumption is relaxed to some degree.

From Proposition \ref{prop gamma}, we have $|I(c)| \le |\{ \bar R_{kj}^2 < 2c/(c-1) \}|$ and
$$ {\text{ER}}(c) = \dfrac{|I(c)|}{|\hat J_1(c)|} \le
\dfrac{|\{ \bar R_{kj}^2 < 2c/(c-1) \}|}{|\{(k,j):\bar R_{kj}^2>c\}|}. $$
Formally, $c$ is obtained by solving the following optimization problem, which has a solution of $c_0 = \inf \{c:\text{ER}(c) \le \alpha  \}$ and can be solved by grid-point search:
\[\begin{array}{cc}
c =  \argmax |\hat J_1(c) \cap J_1| \\
\text{s.t.} \,\,
\dfrac{|\{(k,j):\bar R_{kj}^2 < 2c/(c-1)\}|}{|\{(k,j):\bar R_{kj}^2>c\}|}  \leq \alpha.
\end{array}\]
We simply illustrate the following algorithm for the NBS.
\begin{algorithm}[!h]
\caption{Non-zero Block selector}\label{alg:cap}
\renewcommand{\algorithmicrequire}{\textbf{Input:}}
\renewcommand{\algorithmicensure}{\textbf{Output:}}
\begin{algorithmic}
\Require $X_k,k=1,2,...,K$ ; $Y_j,j=1,2,...,J$
\Ensure Active set $\hat J_1$    %%output

\State 1: Compute $\bar R_{kj}^2$ for each pair $(k,j)$.
% Denote $\bar^2_{i}$
% These $\bar R_{kj}^2$s in ascending order are $\bar R_{(1)}^2,...,\bar R_{(K\cdot J)}^2$.

\State 2: Set $N = 0$, $i=1$ and $\hat J_1= \emptyset$.

\While{$i \le K*J$}

$c \gets \bar R_{(i)}^2$   \Comment{$R_{(i)}^2$ is the ascending sequence of $\bar R_{kj}^2$}

\If{$\text{ER}(c)\leq \alpha$}
\State \textbf{Return} $\hat J_1 = \hat J_1(c), N = |\hat J_1(c)|$
\EndIf

\State $i \gets i+1$

\EndWhile

\State \textbf{Return} $\hat J_1 = \emptyset, N = 0$

\end{algorithmic}
\end{algorithm}

\section{Joint estimation under various situations}
In this section, we introduce the details of how the block-selection model is jointly estimated. There are two crucial dimensions of the block-selection model. The first dimension is the number of blocks, i.e., $K \times J$, corresponding to the number of response and covariate groups, which we allow to grow with and much higher than $n$. The second dimension is within each block. For $k = 1,\dots, K$ and $j = 1\dots, J$, the dimension of the coefficient submatrix within this block is $p_k \times q_k$, which corresponds to the number of covariates and responses within the block.

For the number of covariates in each group, we consider three situations and propose three estimators for each. First, we consider two types of covariates: 1) $X_k$ is of full rank; and 2) $p_k \gg n$. The optimization of the two situations is different. In the former, we do not consider the sparsity requirement within the block, moreover, the least squared estimation is suitable. In the latter situation, we apply penalized regularization within the block. Ultimately, we consider a joint penalized estimator for the sparse block-selection model.

\subsection{Joint estimation under two kinds of covariates}

Before estimating the block-selection model, we first discuss the estimation of the single-block model and how the indicator reacts in the single-block model. The optimization of estimating the block-selection model is based on the optimization function for estimating the single-block model. Assume $K = 1$ and $J = 1$; thus, the block-selection model is reversed to a regular, multivariate, high-dimensional regression problem:
\begin{equation}\label{eq single model}
Y_1 = X_1 B_{\Delta{11}} + E_1.
\end{equation}
Assume that the dimension of $Y_1$ is $q_1 < n$, while the dimension of $X_1$ is $p_1$. We consider two situations for the dimensionality of $X_1$: $p_1 \leqslant n$ and $p_1 > n$. In this model, the indicator determines the presence of the model. After estimating the single-block model, we discuss the estimate of the block selection, i.e., $K \times J >1$.

\subsubsection{Single-block model under full rank of covariates}

In this part, we consider the covariates to be full rank and do not require sparsity in the model. Set $P_{X_1} = X_1 (X_1^\t X_1)^{-1} X^\t_1$. Following from the above discussion, the selection indicator for the single-block model can be estimated as follows:
\begin{equation}\label{eq delta11}
\hat \Delta_{11} = \argmin_{\Delta_{11} \in \{0,1\}}
\Big(\frac{\left\|Y_1-P_{X_{1}} Y_1\right\|_{F}^{2}}{n-p_1-1} \Delta_{11}
+\gamma \frac{\left\|P_{X_1} Y_1\right\|_{F}^{2}}{p_1-1}\left(1-\Delta_{11}\right)
\Big).
\end{equation}
The estimate of the single-block model can be obtained by transforming \eqref{eq delta11} into the following optimization function:
\begin{equation}\label{eq 11}
(\hat \Delta_{11},\hat B_{\Delta_{11}})= \argmin_{\Delta_{11} \in \{0,1\}} \Big\{\dfrac{1}{n_1}\left\|\Delta_{11} \bullet Y_1 - X_1B_{\Delta_{11}} \right\|_{F}^{2}
+\gamma W_{11} \left(1-\Delta_{11}\right)   \Big\},
\end{equation}
where $n_1 = n-p_1-1$ and $W_{11} = \|  P_{X_1} Y_1 \|_{F}^{2}/(p_1-1)$. The following proposition shows the equivalence of the solutions from \eqref{eq 11} and \eqref{eq deltakj}.
\begin{prop}\label{prop equiv}
A solution to \eqref{eq 11} is equivalent to that to \eqref{eq deltakj} and the regression coefficient estimator is unique, as follows:
\[\hat B_{\Delta 11} = (X_1^\t X_1)^{-1} X^\t_1 Y_1 \bullet \hat \Delta_{11}. \]
\end{prop}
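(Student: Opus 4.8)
The plan is to \emph{profile out} the coefficient matrix $B_{\Delta_{11}}$ in \eqref{eq 11} for each of the two admissible values of $\Delta_{11}\in\{0,1\}$, and to check that the resulting reduced objective in $\Delta_{11}$ coincides, value by value, with the objective appearing in \eqref{eq delta11} (equivalently \eqref{eq deltakj} with $k=j=1$). Since the selector is a $0$–$1$ variable, equivalence of the two problems follows once the two reduced objectives agree at $\Delta_{11}=0$ and at $\Delta_{11}=1$.

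First I would treat $\Delta_{11}=0$: because $B_{\Delta_{11}}=B_{11}\bullet\Delta_{11}$, setting $\Delta_{11}=0$ forces $B_{\Delta_{11}}=0$ and $\Delta_{11}\bullet Y_1=0$, so the first term of \eqref{eq 11} is exactly $0$ and the objective reduces to $\gamma W_{11}=\gamma\|P_{X_1}Y_1\|_F^2/(p_1-1)$ — precisely the value of the objective in \eqref{eq delta11} at $\Delta_{11}=0$. Next, for $\Delta_{11}=1$ one has $\Delta_{11}\bullet Y_1=Y_1$ and $B_{\Delta_{11}}$ ranges freely over $\mR^{p_1\times q_1}$, so the inner minimization is the ordinary least-squares problem $\min_{B}\|Y_1-X_1B\|_F^2$; as $X_1$ has full column rank, $X_1^\t X_1$ is invertible, the minimizer $(X_1^\t X_1)^{-1}X_1^\t Y_1$ is unique, and (applying the projection identity columnwise) the minimum value equals $\|Y_1-P_{X_1}Y_1\|_F^2$. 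Dividing by $n_1=n-p_1-1$ and adding the penalty $\gamma W_{11}(1-1)=0$ gives $\|Y_1-P_{X_1}Y_1\|_F^2/(n-p_1-1)$, which matches the value of \eqref{eq delta11} at $\Delta_{11}=1$. Hence the two problems have identical objectives on $\{0,1\}$ and therefore the same minimizer $\hat\Delta_{11}$, which is the claimed equivalence.

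For uniqueness I would argue that the minimizer over $\{0,1\}$ is unique unless the two reduced objective values coincide, a tie excluded by Proposition \ref{prop delta} (the $0$–$1$ integer program \eqref{eq Delta} has a unique solution given the projection matrices), so $\hat\Delta_{11}$ is unique; on the event $\hat\Delta_{11}=1$ the coefficient estimator is the unique least-squares solution displayed above, while on $\hat\Delta_{11}=0$ it is $0$, and both cases are captured by $\hat B_{\Delta 11}=(X_1^\t X_1)^{-1}X_1^\t Y_1\bullet\hat\Delta_{11}$. The computation is essentially routine; the only points requiring care are the bookkeeping around the constraint $B_{\Delta_{11}}=B_{11}\bullet\Delta_{11}$ — one must verify that at $\Delta_{11}=1$ the coefficient is genuinely unconstrained so the inner problem is unpenalized least squares, and at $\Delta_{11}=0$ the fitting term drops out entirely rather than contributing $\|Y_1\|_F^2/n_1$ — and invoking Proposition \ref{prop delta} for the (degenerate) tie case instead of re-deriving it.
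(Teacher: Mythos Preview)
Your proposal is correct and takes essentially the same approach as the paper: both arguments profile out $B_{\Delta_{11}}$ by the least-squares solution and verify that the reduced objective in $\Delta_{11}$ coincides with \eqref{eq delta11}. The only cosmetic difference is that the paper substitutes $X_1\hat B_{\Delta_{11}}=P_{X_1}Y_1\,\Delta_{11}$ into the objective and simplifies in one line using $\Delta_{11}^2=\Delta_{11}$, whereas you evaluate the two cases $\Delta_{11}\in\{0,1\}$ separately; your explicit invocation of Proposition~\ref{prop delta} to rule out ties is, if anything, slightly more careful than the paper's own treatment.
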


The above result establishes the equivalence between estimating the selection indicator and estimating the coefficient matrix for a single-block model. Furthermore, it introduces the uniqueness of the estimator. The results are natural, and the estimator is the block-selected least squares solution. Subsequently, we provide theoretical guarantees for the estimate, and these can be directly derived from Theorem \ref{thm selection consistency}.

\begin{thm}\label{thm ols}
Assume the Assumptions \ref{assumption p}, \ref{assumption Ej} and \ref{assumption sparsity} holds. With probability tending to 1, the estimate to achieve the oracle solution is as follows:
\[ \hat B_{\Delta 11} = (X_1^\t X_1)^{-1} X^\t_1 Y_1 \bullet \Delta_{11}. \]
\end{thm}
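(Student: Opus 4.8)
I would obtain Theorem~\ref{thm ols} as a corollary of Proposition~\ref{prop equiv} and the selection-consistency result, Theorem~\ref{thm selection consistency}, read in the degenerate case $K=J=1$. By Proposition~\ref{prop equiv}, every solution of the joint problem \eqref{eq 11} has the form
\[ \hat B_{\Delta 11} = (X_1^\t X_1)^{-1} X_1^\t Y_1 \bullet \hat\Delta_{11}, \]
where $\hat\Delta_{11}$ minimises \eqref{eq delta11} (equivalently \eqref{eq deltakj} with $k=j=1$). Thus it suffices to show that $\hat\Delta_{11}=\Delta_{11}$ with probability tending to $1$; once this holds, the right-hand side above is exactly the oracle estimator, namely the unrestricted least-squares fit $(X_1^\t X_1)^{-1}X_1^\t Y_1$ when $\Delta_{11}=1$ (no within-block thresholding is needed because $X_1$ is of full rank) and the zero matrix when $\Delta_{11}=0$.

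To establish $\hat\Delta_{11}=\Delta_{11}$ I would specialise the order computations of Section~\ref{section 2.2} to a single group. All cross-group terms indexed by $k'\ne k$ in the expansions of $\|P_{X_1}Y_1\|_F^2$ and $\|Y_1-P_{X_1}Y_1\|_F^2$ vanish, so Assumption~\ref{assumption correlation}, which only constrains between-group quantities, becomes vacuous and may be dropped. When $\Delta_{11}=1$ we have $Y_1-P_{X_1}Y_1=(I_n-P_{X_1})E_1$ because $X_1B_{11}$ lies in the column space of $X_1$, and the bounds of Section~\ref{section 2.2} under Assumptions~\ref{assumption p}, \ref{assumption Ej} and \ref{assumption sparsity} give $l_{11}=O(n^{3\xi/2-1})\to 0$, as in Proposition~\ref{prop the order of lkj}. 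When $\Delta_{11}=0$ we have $Y_1=E_1$, hence
\[ l_{11}=\frac{p_1-1}{n-p_1-1}\cdot\frac{\|(I_n-P_{X_1})E_1\|_F^2}{\|P_{X_1}E_1\|_F^2}; \]
using that the diagonal entries of $\Sigma_1$ are bounded and bounded away from $0$ (Assumption~\ref{assumption Ej}) together with concentration of the projected errors, $\|(I_n-P_{X_1})E_1\|_F^2$ is of exact order $q_1(n-p_1)$ and $\|P_{X_1}E_1\|_F^2$ of exact order $q_1p_1$ with probability tending to $1$, so $l_{11}$ stays of order $1$, bounded away from $0$. The two regimes are therefore separated by a positive constant gap, so any fixed $\gamma\in(0,1)$ chosen small enough satisfies $l_{11}<\gamma$ when $\Delta_{11}=1$ and $l_{11}>\gamma$ when $\Delta_{11}=0$ on an event of probability tending to $1$; in particular no data-driven calibration of $\gamma$, and hence no appeal to Assumption~\ref{assumption enhanced symmetry}, is needed here. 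The detection rule below \eqref{eq deltakj} then returns $\hat\Delta_{11}=\Delta_{11}$, exactly as in the proof of Theorem~\ref{thm selection consistency}.

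Putting the two steps together, on the event $\{\hat\Delta_{11}=\Delta_{11}\}$, whose probability tends to $1$, Proposition~\ref{prop equiv} gives $\hat B_{\Delta 11}=(X_1^\t X_1)^{-1}X_1^\t Y_1\bullet\Delta_{11}$, which is the oracle solution, as claimed. The main obstacle is the case $\Delta_{11}=0$: showing that $l_{11}$ does not also decay to $0$ requires a genuine two-sided estimate, i.e.\ a lower bound on $\|P_{X_1}E_1\|_F^2$, which amounts to a concentration inequality for the squared length of the projection of the (possibly non-Gaussian) error matrix $E_1$ onto the $p_1$-dimensional column space of $X_1$, using the strictly positive lower bound on the noise variances and $p_1=O(n^{\xi})\to\infty$. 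The remaining ingredients — the algebraic identity of Proposition~\ref{prop equiv} and the identification of the limit with the oracle fit — are immediate.
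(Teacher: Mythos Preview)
Your approach mirrors the paper's: reduce via Proposition~\ref{prop equiv} to showing $\hat\Delta_{11}=\Delta_{11}$, then separate the orders of $l_{11}$ in the two regimes (the paper in fact obtains the slightly sharper $l_{11}=O(n^{\xi-1})$ when $\Delta_{11}=1$, since with no cross-group terms the numerator $\|(I_n-P_{X_1})E_1\|_F^2/(n-p_1-1)$ is $O(1)$ rather than $O(n^{\xi/2})$, but either rate suffices). One small correction to your self-identified obstacle: to keep $l_{11}$ bounded away from zero when $\Delta_{11}=0$ you need a \emph{lower} bound on the numerator $\|(I_n-P_{X_1})E_1\|_F^2$ and only an \emph{upper} bound on the denominator $\|P_{X_1}E_1\|_F^2$ (the latter is immediate from Markov's inequality applied to the expectation bound), not a lower bound on $\|P_{X_1}E_1\|_F^2$ as you wrote.
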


The proof of Theorem \ref{thm ols} is similar to that of Theorem~\ref{thm selection consistency}, with a weaker assumption. Based on the above result, we introduce the asymptotic normality of the coefficient estimator.

\begin{thm}\label{thm normal}
Suppose $E_1$ are sub-Gaussian. For each fixed dimension, $q_0 \leqslant q_1$, set the estimator $\hat B_{ \Delta 11,q_0} = \{\hat \beta^{11}_{ll'}\}_{p_1 \times q_0}\bullet\hat \Delta_{11}$ denoting the first $q_0$ columns of $\hat B_{ \Delta 11}$. Then, the estimator is asymptotically normal, that is,
\[ n^{1/2} (\hat B_{ \Delta 11, q_0} - B_{11,q_0}) \rightarrow N_{p_1 \times q_0}( 0, (X^\t_1 X_1/n)^{-1}\Sigma_{q_0}), \]
where $B_{11,q_0}$ denotes the first $q_0$ columns of $B_{11}$ and $\Sigma_{q_0}$ denotes the covariance of $E_{1,q_0}$.
\end{thm}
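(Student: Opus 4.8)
# Proof Proposal for Theorem \ref{thm normal}

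The plan is to reduce the claim to the classical asymptotic normality of the oracle least-squares estimator in a fixed-design multivariate regression, using Theorem \ref{thm ols} to handle the indicator. First I would condition on the event $\{\hat{\Delta}_{11} = \Delta_{11}\}$, which by Theorem \ref{thm ols} has probability tending to $1$. On this event, when $\Delta_{11} = 1$ we have $\hat{B}_{\Delta 11, q_0} = (X_1^\t X_1)^{-1} X_1^\t Y_{1,q_0}$, the ordinary least-squares estimator restricted to the first $q_0$ response columns; when $\Delta_{11} = 0$ the relevant coefficient block is exactly zero and there is nothing to prove (the limit is degenerate). So assume $\Delta_{11} = 1$. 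Substituting the model $Y_{1,q_0} = X_1 B_{11,q_0} + E_{1,q_0}$ gives the standard decomposition
\begin{equation*}
n^{1/2}\bigl(\hat{B}_{\Delta 11, q_0} - B_{11, q_0}\bigr) = \bigl(X_1^\t X_1 / n\bigr)^{-1} \cdot n^{-1/2} X_1^\t E_{1,q_0}.
\end{equation*}
It then suffices to show $n^{-1/2} X_1^\t E_{1, q_0}$ is asymptotically normal with the stated covariance and to invoke Slutsky's theorem together with the (assumed, or easily arranged) convergence of $X_1^\t X_1 / n$ to a fixed positive-definite limit — or, more carefully, to keep $(X_1^\t X_1/n)^{-1}$ inside the statement as the paper does, treating the design as deterministic.

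Next I would establish the CLT for the $p_1 \times q_0$ random matrix $n^{-1/2} X_1^\t E_{1,q_0}$. Vectorize it: $\mathrm{vec}(n^{-1/2} X_1^\t E_{1,q_0}) = n^{-1/2} (I_{q_0} \otimes X_1^\t)\,\mathrm{vec}(E_{1,q_0}) = n^{-1/2} \sum_{i=1}^n \zeta_i$, where $\zeta_i$ is built from the $i$th row of $X_1$ and the $i$th row of $E_{1,q_0}$. Under Assumption \ref{assumption Ej} the rows of $E_{1,q_0}$ are independent (across $i$) with mean zero and covariance $\Sigma_{q_0}$, and independent of $X_1$, so the $\zeta_i$ are independent mean-zero vectors. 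The sub-Gaussian hypothesis on $E_1$ supplies the uniform moment control needed for a Lindeberg–Feller (or Lyapunov) triangular-array CLT; the Lindeberg condition follows because $\max_i \|x_{1i}\|^2 / n \to 0$ under Assumption \ref{assumption p} (since $p_1 = O(n^\xi)$ with $\xi < 1/2$ and the columns of $X_1$ are standardized, the rows cannot carry more than a vanishing fraction of the total energy), combined with the sub-Gaussian tails of $E_{1,q_0}$ killing the truncated second moments. The covariance of $n^{-1/2}\sum_i \zeta_i$ is $\Sigma_{q_0} \otimes (X_1^\t X_1 / n)$, which after left-multiplying by $I_{q_0} \otimes (X_1^\t X_1/n)^{-1}$ produces $\Sigma_{q_0} \otimes (X_1^\t X_1/n)^{-1}$, matching the stated limit $N_{p_1 \times q_0}(0, (X_1^\t X_1/n)^{-1} \Sigma_{q_0})$ under the Kronecker convention for matrix-variate normals.

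The main obstacle is the honest bookkeeping around the growing dimension $p_1$: the limiting object is $p_1 \times q_0$ with $p_1 \to \infty$, so "asymptotic normality" must be read in the sense that every fixed finite-dimensional linear functional (equivalently, every fixed contrast $a^\t \hat{B}_{\Delta 11, q_0}\, b$) is asymptotically normal, or that the total-variation/Wasserstein distance to the Gaussian with the matching covariance vanishes. I would state this precisely and then reduce to a scalar CLT for $a^\t (X_1^\t X_1/n)^{-1} n^{-1/2} X_1^\t E_{1,q_0} b$, for which the Lindeberg verification above goes through verbatim once one checks $a^\t (X_1^\t X_1/n)^{-1} x_{1i}$ stays uniformly negligible relative to $\sqrt{n}$ — again guaranteed by $p_1 = o(\sqrt{n})$ and the eigenvalue control implicit in the design assumptions. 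A secondary, more routine point is confirming that the event from Theorem \ref{thm ols} can be combined with the CLT: since convergence in distribution is unaffected by modifying the sequence on events of probability $o(1)$, restricting to $\{\hat\Delta_{11} = \Delta_{11}\}$ is costless. I would close by noting that when $\Delta_{11} = 1$ the $\bullet\,\hat\Delta_{11}$ and $\bullet\,\Delta_{11}$ factors are identically $1$ and drop out, so the displayed limit is exactly what one gets for the plain oracle OLS estimator.
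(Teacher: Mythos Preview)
The paper does not actually supply a proof of Theorem~\ref{thm normal}: after stating Theorem~\ref{thm ols} and remarking that its proof mirrors that of Theorem~\ref{thm selection consistency}, the paper simply introduces Theorem~\ref{thm normal} as ``the asymptotic normality of the coefficient estimator'' and moves on. No argument for it appears in Appendix~B. The evident intent is that once Theorem~\ref{thm ols} delivers the oracle identity $\hat B_{\Delta 11} = (X_1^\t X_1)^{-1}X_1^\t Y_1 \bullet \Delta_{11}$ with probability tending to one, the asymptotic normality is the textbook multivariate OLS result and needs no further comment.

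Your proposal is therefore not merely consistent with the paper's approach; it \emph{is} the approach, only written out. Conditioning on $\{\hat\Delta_{11}=\Delta_{11}\}$ via Theorem~\ref{thm ols}, writing $n^{1/2}(\hat B_{\Delta 11,q_0}-B_{11,q_0}) = (X_1^\t X_1/n)^{-1}\, n^{-1/2}X_1^\t E_{1,q_0}$, and invoking a Lindeberg--Feller CLT on the vectorized noise term is exactly the standard route the paper is tacitly relying on. Your care about the growing $p_1$ (reading the matrix-normal limit through fixed linear functionals, checking $\max_i\|x_{1i}\|^2/n\to 0$ from Assumption~\ref{assumption p}) goes beyond anything the paper makes explicit; the paper's own statement is informal on precisely this point, and your Cram\'er--Wold reduction is the right way to make it rigorous.
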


\subsubsection{Single-block model under high-dimensional settings}
In this part, we consider the single-block model in which both dimensions of covariates and response are allowed to be larger than the sample size. The covariate, $X_1$, is incapable of full rank and we consider the sparsity assumption. Denote $S_{11}$ to be the set of relevant covariates within the single-block model, and we assume $|S_{11}| < n$. Set
\[P_{X_1} =  X_{\hat S_{11}}(X_{\hat S_{11}}^TX_{\hat S_{11}})^{-1}X_{\hat S_{11}}^T. \]
We use the $l_1$ penalized regularization for the single-block model to obtain the active set, denoted as $\hat S_{11}$ and referred to as lasso for simplicity. As above, the selection indicator can be estimated as follows
\begin{equation}\label{eq delta111}
\hat \Delta_{11} = \argmin_{\Delta_{11} \in \{0,1\}}
\Big(\frac{\left\|Y_1-P_{X_{1}} Y_1\right\|_{F}^{2}}{n-|\hat S_{11}|-1} \Delta_{11}
+\gamma \frac{\left\|P_{X_1} Y_1\right\|_{F}^{2}}{|\hat S_{11}|-1}\left(1-\Delta_{11}\right)
\Big).
\end{equation}
This is equivalent to the following joint estimation of both the correlation penalized function and coefficient estimation,
\begin{align} \label{eq single}
(\hat \Delta_{11},\hat B_{\Delta_{11}})
& = \argmin_{\Delta_{11} \in \{0,1\}, B_{11}: \beta^{11}_{ll'} = 0,  l \notin \hat S_{11}} \Big\{\frac{\left\|\Delta_{11} \bullet Y_1- X_1B_{\Delta_{11}} \right\|_{F}^{2}}{n-|\hat S_{11}|-1}
+\gamma W_{11} \left(1-\Delta_{11}\right)  \Big\},
\end{align}
where $B_{\Delta_{11}} = B_{11}\bullet\Delta_{11} $ and $W_{11} = \|P_{X_1} Y_1\|^2_F/(|\hat S_{11}|-1)$. The following proposition presents the solution of the algorithm and ensures the uniqueness of the estimator.
\begin{prop}\label{prop two}
The solution to \eqref{eq single} is unique and equivalent to that of \eqref{eq delta111}. It can be represented as follows:
\[\hat B_{\Delta 11,\hat S_{11}} = (X_{\hat S_{11}}^TX_{\hat S_{11}})^{-1}X_{\hat S_{11}}^T Y_1 \bullet \hat \Delta_{11}, \]
and $\hat B_{\Delta11, \hat S^c_{11}} = 0$.
\end{prop}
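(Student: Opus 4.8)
The plan is to reproduce, almost verbatim, the reasoning behind Proposition~\ref{prop equiv}, now with the projection matrix built from the screened submatrix $X_{\hat S_{11}}$ rather than from $X_1$ itself. Since $\Delta_{11}$ ranges over the two-point set $\{0,1\}$, problem \eqref{eq single} splits into two explicitly solvable subproblems, and the global optimum is obtained by comparing their optimal values.

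First I would fix $\Delta_{11}=1$. The support constraint $\beta^{11}_{ll'}=0$ for $l\notin\hat S_{11}$ gives $X_1B_{\Delta_{11}}=X_{\hat S_{11}}B_{11,\hat S_{11}}$ and $\Delta_{11}\bullet Y_1=Y_1$, so the objective reduces to $\|Y_1-X_{\hat S_{11}}B_{11,\hat S_{11}}\|_F^2/(n-|\hat S_{11}|-1)$. Because $|\hat S_{11}|<n$ and $X_{\hat S_{11}}$ has full column rank, the Gram matrix $X_{\hat S_{11}}^\t X_{\hat S_{11}}$ is invertible, the quadratic in $B_{11,\hat S_{11}}$ is strictly convex, and its unique minimizer is $(X_{\hat S_{11}}^\t X_{\hat S_{11}})^{-1}X_{\hat S_{11}}^\t Y_1$, attaining the value $\|Y_1-P_{X_1}Y_1\|_F^2/(n-|\hat S_{11}|-1)$ with $P_{X_1}$ as defined just above \eqref{eq delta111}. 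Next I would fix $\Delta_{11}=0$: then $B_{\Delta_{11}}=B_{11}\bullet 0=0$ and $\Delta_{11}\bullet Y_1=0$, so the first term vanishes identically and the objective equals the constant $\gamma W_{11}=\gamma\|P_{X_1}Y_1\|_F^2/(|\hat S_{11}|-1)$, independent of $B_{11}$ on the constraint set.

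Comparing the two optimal values then shows that the global minimizer takes $\hat\Delta_{11}=1$ exactly when $\|Y_1-P_{X_1}Y_1\|_F^2/(n-|\hat S_{11}|-1)\le\gamma\|P_{X_1}Y_1\|_F^2/(|\hat S_{11}|-1)$, which is precisely the decision rule defining \eqref{eq delta111}; hence \eqref{eq single} and \eqref{eq delta111} share the same solution for $\hat\Delta_{11}$. Plugging the corresponding coefficient subproblem solution back in yields $\hat B_{\Delta11,\hat S_{11}}=(X_{\hat S_{11}}^\t X_{\hat S_{11}})^{-1}X_{\hat S_{11}}^\t Y_1\bullet\hat\Delta_{11}$, while $\hat B_{\Delta11,\hat S_{11}^c}=0$ comes directly from the support constraint. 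Uniqueness follows because, for each admissible value of $\Delta_{11}$, the coefficient subproblem has a unique solution (invertibility of $X_{\hat S_{11}}^\t X_{\hat S_{11}}$ for the case $\Delta_{11}=1$, and $B_{\Delta_{11}}=0$ forced for the case $\Delta_{11}=0$), and the choice between $\Delta_{11}=0$ and $\Delta_{11}=1$ is pinned down by a strict inequality between the two objective values except on a null event, which is resolved by the tie-breaking convention already built into \eqref{eq delta111}.

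The main obstacle I anticipate is the bookkeeping around the Hadamard product $\bullet$ and the support constraint acting together: one must verify that for $\Delta_{11}=0$ the objective is genuinely constant in the free entries of $B_{11}$, so that ``the'' optimal coefficient is meaningful only through $B_{\Delta11}=0$, and that for $\Delta_{11}=1$ the problem genuinely collapses to ordinary least squares on the columns indexed by $\hat S_{11}$. A secondary, purely logical point is that the statement is conditional on $\hat S_{11}$ being given (as in the definition of \eqref{eq esti 2}), so the randomness of the lasso screening step plays no role in this proposition and can be deferred to the consistency analysis.
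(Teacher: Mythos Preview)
Your argument is correct and follows essentially the same route as the paper: reduce the constrained problem to the projection $P_{X_1}=X_{\hat S_{11}}(X_{\hat S_{11}}^\t X_{\hat S_{11}})^{-1}X_{\hat S_{11}}^\t$, observe that the objective in \eqref{eq single} collapses to the two-term expression in \eqref{eq delta111}, and read off the explicit least-squares formula on the active set. Your case split on $\Delta_{11}\in\{0,1\}$ is slightly more explicit than the paper's algebraic rewriting, but the content is identical.

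One point of departure worth noting: you treat $\hat S_{11}$ as fixed and defer its properties to later, whereas the paper's proof includes the uniqueness of $\hat S_{11}$ itself as part of the uniqueness claim, invoking the lasso uniqueness result of \citet{tibshirani2012lasso}. Since \eqref{eq single} is stated with $\hat S_{11}$ already appearing in the constraint set, your reading is defensible; but the paper apparently intends ``unique'' to cover the entire pipeline (lasso screening followed by constrained least squares), so you may want to add a sentence citing the lasso uniqueness to match the scope of the claim.
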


Furthermore, in the single-block model, Assumptions \ref{assumption p'}, \ref{assumption correlation} and \ref{assumption sparsity} can be relaxed. We present the following result to show that the proposed estimate achieves the oracle solution for the single-block model and provide the asymptotic normality of the estimator.
\begin{thm}\label{thm lasso}
Suppose $p_1 = O(\exp^{n^{c_1}})$, $q_1 = o(n^{(1+c_1)/2})$, and the true nonzero set, $|S_{11}| = o(n^{c_2})$, where $0 <c_1 + c_2 < 1$. Assume that the rows of the random error matrix, $E_1$, are independent and identically distributed with mean zero and covariance $\Sigma_1$. Additionally, assume the smallest eigenvalue of $X_1^\t X_1/n$ is larger than $\Lambda_{\min} > 0$ and that the following restricted eigenvalue condition holds:
\[ v^\t (X_1^\t X_1/n) v \geqslant \kappa \|v\|^2_2 ~~\text{for all}~~ v \in G(S_{11},1), \]
where $G(S_{11},1) = \{ v \in \mathbb{R}^{p_1}: \|v_{S^c_{11}}\|_1 \leqslant 3\|v_{S_{11}}\|_1\}$. Furthermore, assume there is a small gap between $0$ and $|B_{\min}| \equiv \min\{|\beta^{11}_{ll'}|, \beta^{11}_{ll'} \neq 0\}$ such that, for a positive constant $M_2$,
\[ |B_{\min}| \geqslant M_2 \cdot \sqrt{n^{c_1 + c_2-1}}. \]
With probability tending to 1, the estimate achieves the oracle solution:
\[ \hat B_{\Delta 11} = (X_{S_{11}}^\t X_{S_{11}})^{-1} X^\t_{S_{11}} Y_{1} \bullet \Delta_{11}. \]
\end{thm}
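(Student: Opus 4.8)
The plan is to establish the oracle property of the estimator in \eqref{eq single} by first showing that, under the stated dimensional, restricted-eigenvalue, and minimal-signal conditions, the lasso selection step recovers the true active set $\hat S_{11} = S_{11}$ with probability tending to $1$. Conditional on this event, the projection matrix $P_{X_1}$ in \eqref{eq delta111} becomes the oracle projection onto the column span of $X_{S_{11}}$, so the joint optimization reduces to the single-block indicator problem with the true support already identified. The argument then splits into two parts: (i) the coefficient branch is the block-selected oracle least-squares solution $(X_{S_{11}}^\t X_{S_{11}})^{-1} X_{S_{11}}^\t Y_1 \bullet \Delta_{11}$ by the same algebra as in Proposition \ref{prop two}, and (ii) the indicator branch $\hat \Delta_{11}$ correctly identifies $\Delta_{11}$ by invoking Theorem \ref{thm selection consistency} (or rather its single-block specialization, Theorem \ref{thm ols}), since on the event $\hat S_{11} = S_{11}$ the relevant quantities $l_{11}$ fall into the appropriate asymptotic regime.

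First I would handle the lasso support recovery. Using the restricted eigenvalue condition on $G(S_{11},1)$ together with the standard lasso oracle inequality, I would show that with a regularization parameter of order $\sqrt{\log p_1 / n} \asymp \sqrt{n^{c_1 - 1}}$ (legitimate because $\log p_1 = O(n^{c_1})$), the lasso estimator satisfies an $\ell_\infty$ or $\ell_2$ bound of order $\sqrt{|S_{11}| \log p_1 / n} = O(\sqrt{n^{c_1 + c_2 - 1}})$. The minimal-signal assumption $|B_{\min}| \geqslant M_2 \sqrt{n^{c_1 + c_2 - 1}}$ is exactly calibrated so that this estimation error cannot wipe out any true nonzero coefficient, giving $S_{11} \subseteq \hat S_{11}$; combined with a beta-min / irrepresentability-type argument (or thresholding the lasso solution) this upgrades to $\hat S_{11} = S_{11}$ with high probability. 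The sub-Gaussianity of the rows of $E_1$ and the lower bound $\Lambda_{\min}$ on the eigenvalues of $X_1^\t X_1 / n$ are used here to control the noise term $X_1^\t E_1 / n$ uniformly over the $p_1 q_1$ entries via a union bound, which is where the constraint $q_1 = o(n^{(1+c_1)/2})$ enters — it keeps the total number of entries sub-exponential relative to the concentration rate.

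Next, on the event $\mathcal{A} = \{\hat S_{11} = S_{11}\}$, I would verify the indicator consistency. Here Assumptions \ref{assumption p'}, \ref{assumption correlation}, and \ref{assumption sparsity} can be dropped because with a single block there are no cross-group contamination terms ($k' \ne k$ sums vanish), so the bounds on $\|P_{X_1} Y_1\|_F^2$ and $\|Y_1 - P_{X_1} Y_1\|_F^2$ derived in Section \ref{section 2.2} simplify: when $\Delta_{11} = 1$ the signal term $\langle T_{11}, T_{11}\rangle$ dominates at order $n$ (using $\Lambda_{\min}$ and the signal strength), while the residual is $O(n - |S_{11}|) = O(n)$ times a bounded factor, so $l_{11} \to 0$; when $\Delta_{11} = 0$ the projection term is purely noise of order $|S_{11}| = o(n)$ against a residual of order $n$, so $l_{11}$ stays bounded away from $0$. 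A properly chosen $\gamma$ (equivalently $c = 1 - \gamma$) separates the two regimes, exactly as in Corollary \ref{coro gamma existence}. Plugging $\hat \Delta_{11} = \Delta_{11}$ back into the solution from Proposition \ref{prop two} yields the claimed oracle form.

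The main obstacle I anticipate is the lasso support-recovery step under the unusually weak assumptions the theorem allows: only a restricted eigenvalue condition is imposed (no irrepresentable condition is stated explicitly), the response is multivariate ($q_1$ columns, each fitted by its own lasso or by a group penalty — the paper's phrasing is slightly ambiguous), and $p_1$ is allowed to grow sub-exponentially while $q_1$ also grows. Getting exact support recovery rather than merely $\ell_2$-consistency from a restricted eigenvalue condition alone generally requires either thresholding the lasso output or an additional incoherence assumption; I would likely route around this by arguing that the indicator $\hat \Delta_{11}$ is in fact robust to the weaker conclusion $S_{11} \subseteq \hat S_{11}$ with $|\hat S_{11}| = O(n^{c_2})$ (a "sure screening" statement, which the restricted eigenvalue condition plus beta-min does deliver), so that $P_{X_1}$ is still a projection onto a subspace of dimension $o(n)$ containing the true signal, and the order computations for $l_{11}$ go through with $|\hat S_{11}|$ in place of $|S_{11}|$. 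Reconciling the oracle statement "$\hat B_{\Delta 11} = (X_{S_{11}}^\t X_{S_{11}})^{-1} X_{S_{11}}^\t Y_1 \bullet \Delta_{11}$" with only screening (rather than exact selection) is the delicate point, and I expect the proof to need the beta-min condition a second time — to argue the post-screening refit zeroes out the spurious coordinates — or to simply assume exact recovery is available from the cited lasso literature under the stated conditions.
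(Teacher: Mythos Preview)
Your proposal is correct and follows essentially the same two-part structure as the paper's proof: first establish $\hat S_{11} = S_{11}$ with high probability via the lasso $\ell_2$/Frobenius oracle inequality (the paper cites \cite{negahban2012unified} and \cite{yyh2020two}) combined with the beta-min condition, then on that event reduce to the single-block indicator problem and verify $l_{11} = O(n^{c_2-1}) \to 0$ when $\Delta_{11}=1$ versus $l_{11} = O(1)$ when $\Delta_{11}=0$. Your concern in the final paragraph about exact support recovery from a restricted-eigenvalue condition alone is well taken, but the paper simply passes from the Frobenius bound to an $\ell_\infty$ bound and invokes beta-min to assert $\hat S_{11} = S_{11}$ directly, without the thresholding or screening workaround you sketch.
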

\rmk{In Theorem \ref{thm lasso}, the assumptions can be seen as special cases of \ref{assumption p'}, \ref{assumption Ej} and \ref{assumption sparsity}. Specifically, we replace Assumption \ref{assumption sparsity} with the restricted eigenvalue condition, which controls sparsity more effectively. Requirements in this theorem are inherited from the lasso framework \citep{negahban2012unified,yyh2020two} as the active set $\hat S_{11}$ is obtained from the lasso.}

\begin{thm}\label{thm lasso normal}
Suppose $E_1$ are sub-Gaussian. Suppose the same assumptions of Theorem~\ref{thm lasso}. For each fixed dimension, $q_0 \leqslant q_1$, set the estimator $\hat B_{ \Delta 11,q_0} = \{\hat \beta^{11}_{ll'}\}_{p_1 \times q_0}\bullet\hat \Delta_{11}$. Then, the estimator is asymptotically normal, that is,
\[ n^{1/2} (\hat B_{ \Delta 11, q_0} - B_{11,q_0}) \rightarrow N_{p_1 \times q_0}( 0, (X^\t_1 X_1/n)^{-1}\Sigma_{q_0}), \]
where $B_{11,q_0}$ denotes the first $q_0$ columns of $B_{11}$ and $\Sigma_{q_0}$ denotes the covariance of $E_{1,q_0}$.
\end{thm}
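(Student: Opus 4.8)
The plan is to combine the oracle-recovery guarantee of Theorem~\ref{thm lasso} with a standard central limit argument for the least-squares estimator restricted to the true active set. By Theorem~\ref{thm lasso}, with probability tending to $1$ the selection indicator is recovered, $\hat\Delta_{11} = \Delta_{11}$, and the lasso active set satisfies $\hat S_{11} = S_{11}$; hence on this event the estimator coincides exactly with the oracle estimator $\hat B_{\Delta 11} = (X_{S_{11}}^\t X_{S_{11}})^{-1} X_{S_{11}}^\t Y_1 \bullet \Delta_{11}$. Because asymptotic normality is a statement about convergence in distribution, it suffices to work on this probability-one event and study the oracle estimator directly; the event on which recovery fails has vanishing probability and therefore does not affect the limiting law (a Slutsky-type argument).

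First I would fix the $q_0$ columns of interest and write, on the recovery event, $\hat B_{\Delta 11, q_0} - B_{11,q_0} = (X_{S_{11}}^\t X_{S_{11}})^{-1} X_{S_{11}}^\t E_{1,q_0}$, using $Y_{1,q_0} = X_{S_{11}} B_{S_{11},q_0} + E_{1,q_0}$ and the fact that the true coefficients outside $S_{11}$ vanish. Rescaling, $n^{1/2}(\hat B_{\Delta 11,q_0} - B_{11,q_0}) = (X_{S_{11}}^\t X_{S_{11}}/n)^{-1} \cdot n^{-1/2} X_{S_{11}}^\t E_{1,q_0}$. The matrix $(X_{S_{11}}^\t X_{S_{11}}/n)^{-1}$ is a fixed (conditional on $X$) invertible matrix by the smallest-eigenvalue bound $\Lambda_{\min}>0$ in Theorem~\ref{thm lasso}. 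For the stochastic factor, I would vectorize: each column of $n^{-1/2} X_{S_{11}}^\t E_{1,q_0}$ is a normalized sum of the i.i.d.\ rows of $X_{S_{11}}^\t E_{1}$, and since the rows of $E_1$ are i.i.d.\ sub-Gaussian with mean zero and covariance $\Sigma_1$ and are independent of $X_1$ (Assumption~\ref{assumption Ej}), a multivariate Lindeberg--Feller / Lyapunov CLT applies, giving $n^{-1/2}\,\mathrm{vec}(X_{S_{11}}^\t E_{1,q_0}) \to N(0,\, \Sigma_{q_0} \otimes (X_{S_{11}}^\t X_{S_{11}}/n))$ (the $X$-factor appearing as the asymptotic Gram matrix). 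Combining with the deterministic pre-multiplication and simplifying the Kronecker product yields the stated covariance $(X_1^\t X_1/n)^{-1}\Sigma_{q_0}$ (interpreting this, as the rest of the paper does, on the relevant coordinates / with the oracle Gram matrix), which finishes the proof.

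The main obstacle is handling the transition from the recovery event to the unconditional limiting distribution cleanly, together with the fact that $q_1$ — and hence the ambient dimension of $E_{1,q_0}$ — may grow with $n$; fixing $q_0$ is exactly what tames this, but one must be careful that the sub-Gaussian tails give a uniform Lyapunov condition so the CLT is legitimate for the growing-dimension design. A secondary technical point is the sub-Gaussianity of $E_1$: it is needed (beyond the weaker moment conditions of Theorem~\ref{thm lasso}) precisely to control the tails in the Lyapunov bound and to ensure the cross-terms $X_{S_{11}}^\t E_1$ concentrate fast enough; I would invoke it there and nowhere else. Everything else — the algebraic identity for the estimator error, the invertibility of the Gram matrix, and the Kronecker-product bookkeeping for the limiting covariance — is routine and can be cited from Theorem~\ref{thm lasso} and standard CLT results.
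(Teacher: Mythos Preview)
The paper does not actually provide a written proof of Theorem~\ref{thm lasso normal} (nor of the analogous Theorem~\ref{thm normal} in the full-rank case); both are stated immediately after the corresponding oracle-recovery results and are evidently meant to be read as direct consequences. Your proposal fills in precisely the argument one would expect the authors to have in mind: reduce to the oracle estimator on the high-probability event $\{\hat\Delta_{11}=\Delta_{11},\ \hat S_{11}=S_{11}\}$ supplied by Theorem~\ref{thm lasso}, write the error as $(X_{S_{11}}^\t X_{S_{11}}/n)^{-1}\, n^{-1/2} X_{S_{11}}^\t E_{1,q_0}$, and invoke a multivariate CLT for the i.i.d.\ rows together with Slutsky for the vanishing failure event. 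This is correct and is the same route implicitly taken by the paper.

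Your parenthetical remark about interpreting $(X_1^\t X_1/n)^{-1}$ on the active coordinates is well taken: in the regime $p_1>n$ the full Gram matrix is singular, so the limiting covariance must be understood as $(X_{S_{11}}^\t X_{S_{11}}/n)^{-1}\Sigma_{q_0}$ on $S_{11}$ and zero elsewhere; the paper's notation is loose here and your reading is the sensible one.
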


\subsection{Joint penalized estimation of the sparse block-selection model\label{Discussion}}

As noted in the above two subsections, after estimating the single-block model, the complete block-selection model can be obtained directly by integrating \eqref{eq 11} and \eqref{eq single}:
\[
(\hat \Delta,\hat B_{\Delta})= \argmin_{\Delta \in \mathcal{V}^{K \times J},
\{B_{kj} \in \mathcal{B}_{kj}\}_{K\times J}} \dfrac{1}{KJ} \sum^J_{j = 1} \sum^K_{k = 1} \Big\{\dfrac{1}{n_k}\left\|\Delta_{kj} \bullet Y_j - X_kB_{\Delta_{kj}} \right\|_{F}^{2}
+\gamma W_{kj} \left(1-\Delta_{kj}\right)  \Big\},
\]
where $\mathcal{V} = \{0,1\}$, $\mathcal{B}_{kj} = \{ B_{kj} \in \mathbb{R}^{p_k \times q_j}: \beta_{ll'}^{kj} = 0,l\notin \hat S_{kj}\} $ and $\hat S_{kj}$ equals the complete set when $X_k$ is of full rank.

In this part, we propose an effective algorithm for situations where sparsity is always required. This is a natural situation in which, even if the dimensions of a block are smaller than the sample size and the number of blocks is shrunk by the indicator selector, the total number of coefficients in the complete block model remains relatively large. In this case, we introduce the following joint penalized estimation for the sparse block model.

\begin{algorithm}
\caption{Two-step algorithm for block-selection model\label{algorithm 2}}
\hspace*{0.02in}{\bf Inputs:}$X$ and $Y$.\\\vspace{-25 pt}
\begin{algorithmic}[0]
\State Step 1: for each $k = 1,\dots, K$ and $j = 1,\dots,J$, calculate $\hat \Delta_{kj}$ based on \eqref{eq 11} or \eqref{eq single}.

\State Step 2: Calculate the coefficient matrix $B_\Delta$ as follows:
\[\hat B_{\Delta}=\underset{B_{\Delta_{kj}} = 0, (k,j) \in \hat J_0}{\operatorname{argmin}}\left\{\|Y-X B_{\Delta}\|_{F}^{2}+\lambda \|B_{\Delta}\|_1\right\}. \]
\end{algorithmic}
\hspace*{0.02in}{\bf Output:} $\hat B_{\Delta}$.
\end{algorithm}

This method enjoys the computational advantage of traditional penalized optimization techniques. After determining the selection indicator, solving the block model is equivalent to solving a lasso-type optimization, which significantly reduces the computational cost. Based on the above discussions, we have derived the following two properties for the coefficient estimator.

\begin{thm}\label{thm 2}
Assume the following restricted eigenvalue condition holds:
\[ v^\t(X^\t X/n) v \geqslant \kappa \|v\|^2_2,\]
for all $v \in G(S)$ where $G(S) = \{ v \in \mR^P : \|v_{S^c}\|_1 \leqslant 3\|v_{S}\|_1\}$. Assume $\lambda =  M_3 \sqrt{n\log P}$ where $M_3$ is a positive constant. Then, with probability tending to $1$, the following error bound for the estimate holds,
\[\|\hat B - B\|_2 \leqslant M_3 \sqrt{\dfrac{|S|\log P}{n}},\]
where $S = \cup S_{kj} $.
\end{thm}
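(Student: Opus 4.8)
\textbf{Proof proposal for Theorem \ref{thm 2}.}

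The plan is to reduce the analysis of Step 2 of Algorithm \ref{algorithm 2} to a standard lasso oracle-inequality argument, once we have conditioned on the selection event $\hat J_0 = J_0$. The first step is to invoke Theorem \ref{thm selection consistency}: with probability tending to $1$ the estimated inactive block set coincides with the true one, $\hat J_0 = J_0$, so the constraint $B_{\Delta_{kj}} = 0$ for $(k,j) \in \hat J_0$ is exactly the constraint that holds for the true coefficient matrix $B$. On this event, the minimization in Step 2 is a lasso problem over the restricted parameter space where all blocks in $J_0$ are forced to zero, and the true $B$ is feasible. Hence we may henceforth argue conditionally on this event and treat $B$ as a competitor in the optimization.

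The second step is the usual basic inequality: since $\hat B_\Delta$ minimizes $\|Y - X B_\Delta\|_F^2 + \lambda\|B_\Delta\|_1$ over the feasible set and $B$ is feasible, plugging in $Y = XB + E$ and rearranging gives
\[
\|X(\hat B - B)\|_F^2 \le 2\langle E, X(\hat B - B)\rangle + \lambda\big(\|B\|_1 - \|\hat B\|_1\big).
\]
The third step controls the stochastic term: on the event $\{\|X^\t E\|_\infty \le \lambda/2\}$, which holds with probability tending to $1$ for $\lambda = M_3\sqrt{n\log P}$ because the columns of $X$ are standardized and $E$ has sub-Gaussian-type tails (inherited from Assumption \ref{assumption Ej} together with the moment control used elsewhere), we get $2\langle E, X(\hat B - B)\rangle \le \lambda\|\hat B - B\|_1$. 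Combining with the norm manipulation $\|B\|_1 - \|\hat B\|_1 \le \|(\hat B - B)_S\|_1 - \|(\hat B - B)_{S^c}\|_1$ (here $S = \cup S_{kj}$, the union of the true within-block supports) yields that $v := \mathrm{vec}(\hat B - B)$ satisfies the cone condition $\|v_{S^c}\|_1 \le 3\|v_S\|_1$, i.e. $v \in G(S)$, and that $\|X v\|_F^2 \le 2\lambda\|v_S\|_1 \le 2\lambda\sqrt{|S|}\,\|v\|_2$.

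The fourth step applies the restricted eigenvalue condition $v^\t(X^\t X/n)v \ge \kappa\|v\|_2^2$ on $G(S)$ to lower-bound the left side by $n\kappa\|v\|_2^2$, so $n\kappa\|v\|_2^2 \le 2\lambda\sqrt{|S|}\,\|v\|_2$, giving $\|v\|_2 \le (2\lambda/(n\kappa))\sqrt{|S|} = (2M_3/\kappa)\sqrt{|S|\log P/n}$, which is the claimed bound after absorbing $2/\kappa$ into the constant $M_3$ (and recalling $\|\hat B - B\|_2$ denotes the Frobenius/vectorized $\ell_2$ norm). I expect the main obstacle to be the rigorous handling of the probability-union over the selection event and the event $\{\|X^\t E\|_\infty \le \lambda/2\}$ when $E$ is matrix-valued with correlated columns within a response group: one must argue that the maximum over all $P \cdot Q$ (or at least $P$) entries of $X^\t E$ concentrates at the rate $\sqrt{n\log P}$ using the boundedness of the diagonal of $\Sigma_j$ from Assumption \ref{assumption Ej}, and that this high-probability event is compatible with the selection-consistency event. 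Everything else is the textbook lasso computation and can be stated briefly.
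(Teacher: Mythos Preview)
Your proposal is correct and follows essentially the same approach as the paper: first invoke Theorem~\ref{thm selection consistency} so that on a high-probability event the constraint set in Step~2 is exactly the true block support, then apply the standard lasso oracle-inequality machinery under the restricted eigenvalue condition. The paper's own proof is far terser---it simply cites Lemma~A.2 and Lemma~A.3 of \citet{yyh2020two} for the multi-response lasso bound rather than writing out the basic inequality, cone argument, and RE step---but your self-contained version (including the concern you flag about controlling $\|X^\t E\|_\infty$ uniformly over the $Q$ response columns) is precisely what those cited lemmas package.
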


Similarly to the notation of Theorem~\ref{thm lasso}, we assume $B_{\min} = \min\{ |\beta^{kj}_{ll'}| : \beta^{kj}_{ll'} \neq 0, k= 1,\dots,K, j = 1,\dots,J \}$ and the following result hold,
\begin{thm}\label{thm 3}
Under the same assumptions of Theorem~\ref{thm 2}. Assume $B_{\min} \geqslant M_3 \sqrt{|S| \log P /n}$. Then, we have
\[ P(\sign(\hat B) = \sign(B)) \rightarrow 1.\]
\end{thm}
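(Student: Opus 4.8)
The plan is to combine the $\ell_2$-consistency bound from Theorem~\ref{thm 2} with the $B_{\min}$ separation condition to rule out sign errors, block by block. First I would recall that, by Theorem~\ref{thm selection consistency}, with probability tending to $1$ the estimated active block set coincides with the true one, $\hat J_1 = J_1$; hence with high probability the constraint $B_{\Delta_{kj}} = 0$ for $(k,j) \in \hat J_0$ in Step 2 of Algorithm~\ref{algorithm 2} is exactly the oracle constraint $B_{\Delta_{kj}} = 0$ for $(k,j)\in J_0$. On this event, the zero blocks are estimated as exactly zero, so $\sign(\hat B_{kj}) = \sign(B_{kj}) = 0$ holds trivially on every irrelevant block, and it remains only to control the sign of the entries lying in relevant blocks, i.e. the coordinates in $S = \cup S_{kj}$ together with the (exactly zero) coordinates inside relevant blocks.

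Next I would invoke Theorem~\ref{thm 2}: on the same high-probability event, $\|\hat B - B\|_2 \le M_3\sqrt{|S|\log P / n}$, which in particular gives the coordinatewise bound $\max_{l,l'} |\hat\beta_{ll'} - \beta_{ll'}| \le \|\hat B - B\|_2 \le M_3\sqrt{|S|\log P/n}$. For any coordinate with $\beta^{kj}_{ll'} \ne 0$ we have $|\beta^{kj}_{ll'}| \ge B_{\min} \ge M_3\sqrt{|S|\log P/n}$ by assumption, so by the triangle inequality $\hat\beta^{kj}_{ll'}$ cannot cross zero and $\sign(\hat\beta^{kj}_{ll'}) = \sign(\beta^{kj}_{ll'})$. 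To make the strict inequality work cleanly I would either use a strict version $B_{\min} > M_3\sqrt{|S|\log P/n}$ or absorb the constant, noting that the precise constant in Theorem~\ref{thm 2} can be taken strictly smaller than the one in the $B_{\min}$ hypothesis; this is the standard $\beta_{\min}$-condition argument. For the zero coordinates inside relevant blocks (those in $S_{kj}$'s complement within a relevant block), the lasso penalty with $\lambda = M_3\sqrt{n\log P}$ together with the restricted-eigenvalue condition forces the estimate to be either exactly zero or within the same $O(\sqrt{|S|\log P/n})$ band of zero; combined with the KKT/primal-dual witness construction this yields $\hat\beta^{kj}_{ll'} = 0$ on those coordinates, matching $\sign(0) = 0$.

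Putting these together, on an event of probability tending to $1$ every coordinate of $\hat B$ has the same sign as the corresponding coordinate of $B$, which is the claim. The main obstacle I anticipate is the treatment of the zero coordinates inside relevant blocks: the $\ell_2$ bound alone only shows they are small, not exactly zero, so one genuinely needs either a primal-dual witness / irrepresentability-type argument or an explicit thresholding step to conclude exact sign recovery there. I would handle this by strengthening the restricted-eigenvalue condition to the standard irrepresentable condition on $X^\t X$ restricted to $S$ (implicitly available since the result is "inherited from the lasso framework" as remarked after Theorem~\ref{thm lasso}), which is the usual price for sign consistency as opposed to mere $\ell_2$-consistency; the rest of the argument is then routine.
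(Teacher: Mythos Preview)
Your approach is essentially the same as the paper's: combine the $\ell_2$ error bound of Theorem~\ref{thm 2} with the $B_{\min}$ separation condition to conclude sign consistency. The paper's own proof is in fact much terser than yours---it simply cites Theorem~\ref{thm 2} and the $B_{\min}$ hypothesis and declares the result, without the block-selection preamble via Theorem~\ref{thm selection consistency} and without separately treating zero coordinates inside relevant blocks. The obstacle you flag (that the $\ell_2$ bound alone does not force exact zeros on $S^c$ within relevant blocks, so a primal--dual witness or irrepresentability-type ingredient is implicitly needed) is real, but the paper does not address it either; your proposal is therefore at least as complete as the published argument, and more transparent about where an additional lasso sign-recovery condition would be required.
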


The above two results provide overall theoretical guarantees for the block-selection model. We do not restrict the number of blocks or the dimensions within each block. Specifically, we allow the dimensions of each block to grow exponentially. In this case, an effective and robust shrinkage technique is crucial to our method. We name the joint estimation obtained by Algorithm \ref{algorithm 2} NBSlasso and demonstrate its performance in simulations and a real example below.

\section{Simulation}
In this section, we compare the performance of NBSlasso, SCCS, lasso, and elastic net through numerical simulation experiments, where NBSlasso is obtained using Algorithm~\ref{algorithm 2}. We conduct numerical simulation experiments with two different data dimensions: $(n=150,p=200,q=200)$ and $(n=150,p=400,q=200)$, as well as the following two group settings:

\textbf{Group setting 1.} Each group has the same size of $20$, i.e., under the dimension $(n=150,p=200,q=200)$, the group structures is $(1-20,21-40,41-60,...,181-200)$.

\textbf{Group setting 2.} The group structures under dimension $(n=150,p=200,q=200)$ are represented by $(1-20,21-50,51-70,70-100,...,171-200)$, in which all the group sizes are unequal.

We generate the covariates, $X$, from a multivariate normal distribution, $N_p(0,\Sigma)$, where $\Sigma = \{0.5^{|k - k'|}\}_{p \times p}$. Recall that we do not demand independence among the various covariate groups; rather, we permit a modest level of correlation to better approximate the true structure of real datasets. Initially, we designate $K_j$ as the sparsity indicator for $Y_j$. For example, $K_1 = 2$ implies that we randomly select two nonzero blocks from $B_{11},\dots,B_{K1}$ for $Y_1$. The sparsity level $K_j$ is determined in two scenarios: 1) The fixed case. $K_j$ is constant for $Y_j$ where $j = 1,\dots,J$; 2) The random case. $K_j$ is randomly chosen and may vary among different response groups. Within each nonzero block, the sparsity level of the sub-coefficient matrix is selected from $10\%$ to $90\%$, which means that 10\% to 90\% of the entries in the block are zero. The values of nonzero entries are from a uniform distribution, $[-5,-1]\cup [1,5]$. Figure \ref{simulation dataset gen} illustrates the process of generating a simulation data set and all simulations are repeated 100 times.

\begin{figure}[H]
\centering
\includegraphics[scale=0.12]{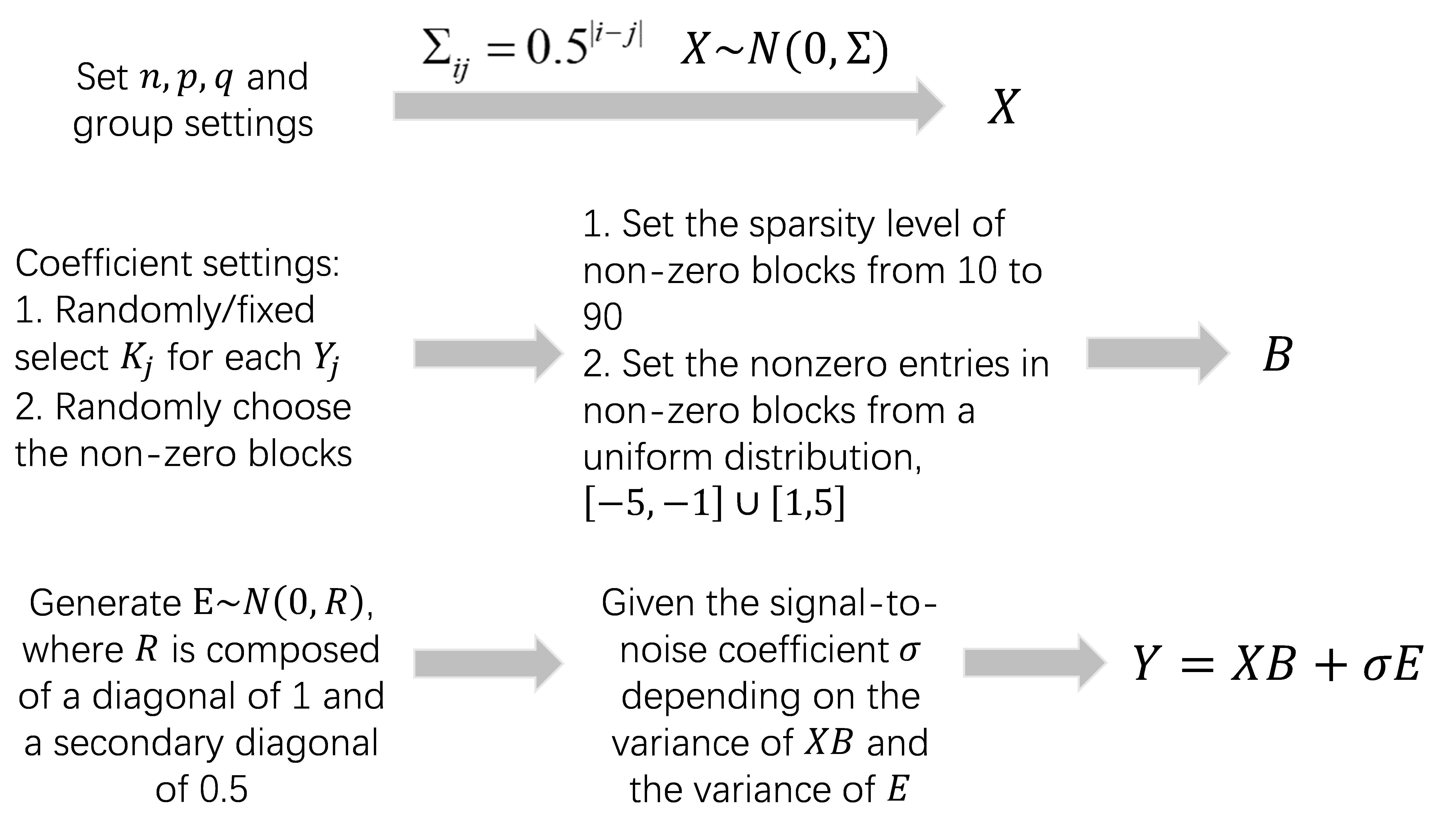}
\caption{Generating simulation dataset.\label{simulation dataset gen}}
\end{figure}

We compare the performance of the four methods using the following nine metrics: the mean square error of the test set (TestMSE); the precision of selected non-zero blocks (Precision); the recall rate of non-zero blocks (Recall); the $L_1$ norm of $B-\hat{B}$; the $L_2$ norm of $B-\hat{B}$; the positive discovery rate of non-zero entries (PDR); the false discovery rate of non-zero entries (FDR); the number of non-zero entries estimated by the methods (NNE); and the computational time (Time). Note that ER, as mentioned in Section \ref{section gamma}, is equal to $1-\text{Precision}$. The simulations are performed on a standard laptop computer equipped with a 2.60 GHz Intel Core i5-11260H processor.

\vspace{0.8em}
\noindent{\textbf{Performance comparison with random sparsity indicators}}

In this part, we discuss the performance of the four methods in two scenarios with random sparsity indicators. For each response block, the sparsity indicator $K_j$ is randomly chosen from two sets: $K_j \in \{2,3,4\}$ and $K_j \in \{1,2,...,9\}$. Table \ref{D1G1Random} demonstrates the performance of the four methods under the following conditions: dimension $n=150, p=200, q=200$, Group setting 1, with a random sparsity indicator of $K_j \in \{2,3,4\}$. We consider three sparsity levels here: 30\%, 60\%, and 90\%. As one can see, NBSlasso outperforms the other methods in terms of TestMSE and consistently maintains a low value. It precisely selects all non-zero blocks, achieving satisfactory precision and recall rates of 1. The accuracy of NBSlasso is also reflected in the $L_1$, $L_2$, and PDR, which show the lowest values for $L_1$ and $L_2$ and a high PDR. Additionally, NBSlasso outperforms the other three methods in terms of computational efficiency. NBSlasso yields a higher FDR than SCCS, for example, when the sparsity equals to 90\%, NBSlasso achieves a 0.67 FDR compared to SCCS's 0.29. However, NBSlasso has a precision and recall rate of 1. These two observations suggest that NBS can accurately select all non-zero blocks, but the non-zero entries selected by Lasso contains many spurious variables. Though we do not show the NNE metrics here due to the size limit of the table, the following discussion will involve them. Due to space constraints, additional comparisons under different dimensions and group settings are shown under ``Supplementary.''

\begin{table}[h!]
\centering
\caption{Performance comparison under dimensions $n=150,p=200,q=200$, Group setting 1, random sparsity indicators of $K_j \in \{2,3,4\}$, and sparsity-levels within each block are $30\%, 60\%$, and $90\%$ respectively. \label{D1G1Random}}
\resizebox{\textwidth}{!}{%
\begin{tabular}{cccccccccc}
\hline
Sparsity & Method\_name & TestMSE    & Precision  & Recall     & L1            & L2           & PDR        & FDR        & Time(s)          \\ \hline
30       & NBSlasso     & 0.09(0.00) & 1.00(0.00) & 1.00(0.00) & 345.22(31.79) & 17.68(1.78)  & 0.97(0.00) & 0.22(0.00) & 13.74(1.64)   \\ \cline{2-10}
& SCCS         & 0.50(0.03) & 0.99(0.02) & 1.00(0.00) & 964.44(41.82) & 139.03(6.29) & 0.27(0.02) & 0.06(0.00) & 405.82(6.82)  \\ \cline{2-10}
& ElasticNet       & 0.26(0.03) & 0.31(0.02) & 1.00(0.00) & 828.03(55.17) & 73.69(6.85)  & 0.76(0.03) & 0.55(0.02) & 23.63(0.56)   \\ \cline{2-10}
& Lasso        & 0.26(0.03) & 0.31(0.02) & 1.00(0.00) & 813.51(59.82) & 73.77(7.65)  & 0.73(0.03) & 0.54(0.02) & 23.75(0.72)   \\ \hline
60       & NBSlasso     & 0.08(0.00) & 1.00(0.00) & 1.00(0.00) & 237.81(19.75) & 11.84(0.94)  & 0.99(0.00) & 0.42(0.01) & 11.59(1.38)   \\ \cline{2-10}
& SCCS         & 0.33(0.04) & 0.94(0.05) & 1.00(0.00) & 542.34(45.96) & 84.58(8.31)  & 0.47(0.03) & 0.11(0.00) & 430.47(18.89) \\ \cline{2-10}
& ElasticNet       & 0.11(0.01) & 0.30(0.02) & 1.00(0.00) & 467.10(29.73) & 29.21(3.06)  & 0.94(0.01) & 0.70(0.02) & 19.54(0.86)   \\ \cline{2-10}
& Lasso        & 0.10(0.01) & 0.30(0.02) & 1.00(0.00) & 427.66(29.46) & 25.57(3.04)  & 0.94(0.01) & 0.68(0.02) & 19.16(0.86)   \\ \hline
90       & NBSlasso     & 0.09(0.01) & 1.00(0.00) & 1.00(0.00) & 93.85(10.07)  & 4.98(0.44)   & 1.00(0.00) & 0.67(0.01) & 10.59(1.28)   \\ \cline{2-10}
& SCCS         & 0.10(0.01) & 0.77(0.10) & 1.00(0.00) & 98.17(16.02)  & 11.62(2.64)  & 0.89(0.02) & 0.29(0.05) & 282.64(29.85) \\ \cline{2-10}
& ElasticNet       & 0.06(0.00) & 0.31(0.03) & 1.00(0.00) & 228.65(8.75)  & 8.94(0.61)   & 1.00(0.00) & 0.90(0.01) & 13.77(0.48)   \\ \cline{2-10}
& Lasso        & 0.06(0.00) & 0.31(0.03) & 1.00(0.00) & 202.66(8.10)  & 7.43(0.56)   & 1.00(0.00) & 0.89(0.01) & 12.95(0.45)   \\ \cline{2-10}
\end{tabular}%
}
\end{table}

We then test the performance of all the methods under the more general random sparsity indicator of $K_j \in \{1,2,...,9\}$. Figure \ref{general random case figure2} illustrates the differences between the methods by focusing on four metrics: TestMSE, NNE (the number of estimated non-zero entries), $L_2$, and $L_1$ with sparsity levels ranging from 10\% to 90\% under the dimensions $n=150,p=200,q=200$, Group setting 1, with random sparsity indicators of $K_j \in \{1,2,\dots,9\}$. We can see that NBSlasso generally outperforms the other methods in terms of TestMSE. Regarding the other three metrics, we see that NBSlasso consistently maintains its advantage in terms of the $L_1$ norm and $L_2$ norm, while accurately estimating the number of non-zero coefficients, indicating its accuracy.

\begin{figure}[H]
\centering
\includegraphics[width = 16cm]{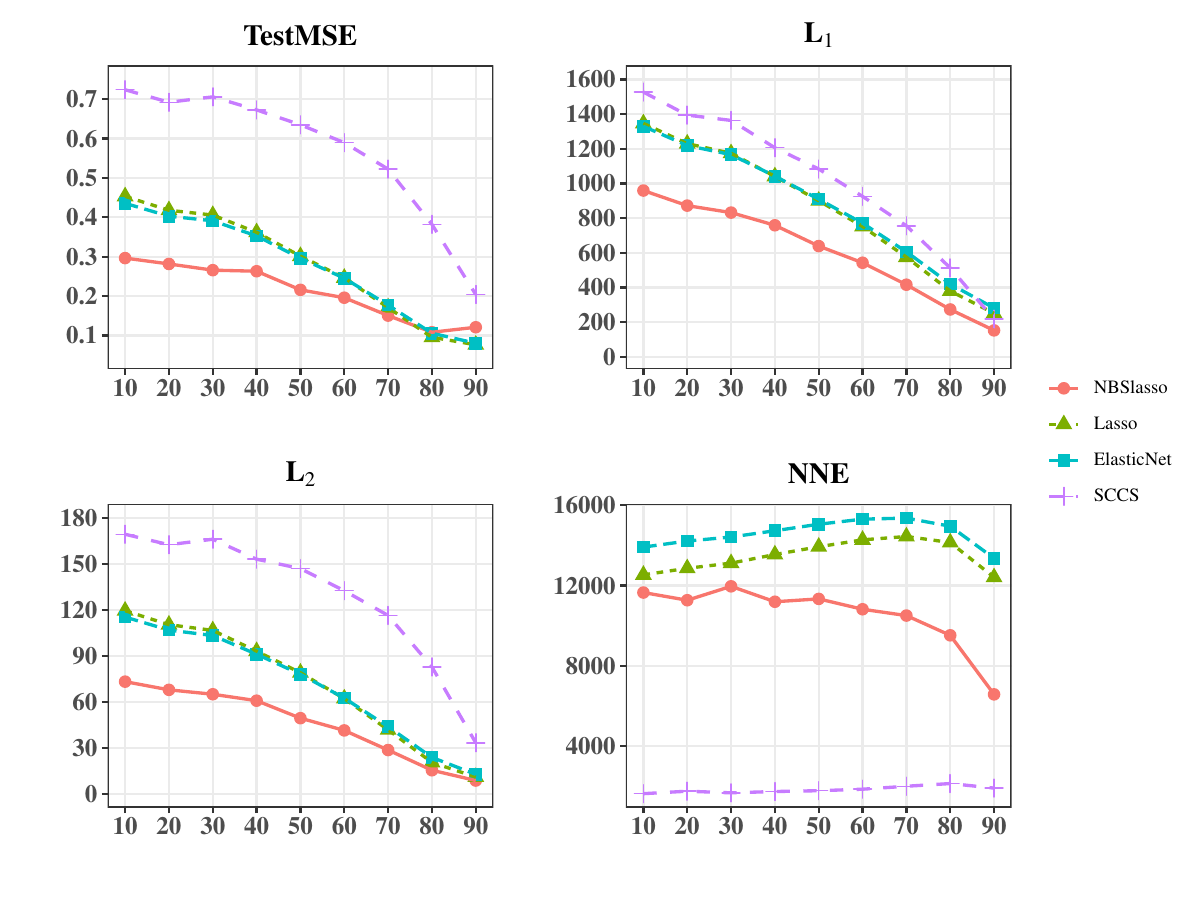}
\caption{Performance comparison under dimensions $n=150,p=200,q=200$, Group setting 1, and random sparsity indicators of $K_j \in \{1,2,\dots,9\}$. The x-axis represents the sparsity level (\%) within non-zero blocks.\label{general random case figure2}}
\end{figure}

\vspace{0.4em}
\noindent{\textbf{Performance comparison with fixed sparsity indicators}}

In this part, we demonstrate the performance of all the methods with fixed sparsity indicators of $K_j = 2,4,6,8$, respectively. In Figure \ref{general random case figure3}, we display the TestMSE, NNE, and $L_2$ metrics of all the methods with sparsity levels ranging from 10\% to 90\% under dimensions $n=150,p=200,q=200$ and Group setting 1. In terms of $K_j = 2$, the NNE estimated by NBSlasso is closest to the real model, and NBSlasso has the smallest $L_2$. Though in the cases with high sparsity levels, the TestMSE of NBSlasso is larger than that of lasso or elastic net, it is most likely due to the overfitting of lasso and elastic net, which can be inferred from the much larger NNE than the real model. The results for $K_j = 4$ are similar to those for $K_j = 2$. In the denser cases, such as when $K_j = 6$ and $K_j = 8$, the NNE estimated by NBSlasso is not close to the real model, like other methods. Though the precision and recall rate of non-zero blocks are not presented here, we claim that the bias of NNE between NBSlasso and the real model is not due to block selection but rather to coefficient estimation by the penalized method. Nonetheless, NBSlasso maintains the best $L_2$ metrics, indicating that the model estimated by NBSlasso is more accurate than others. To conclude, NBSlasso outperforms the other methods in terms of TestMSE in most situations. Additionally, the NBS has the smallest $L_2$ while estimating fewer non-zero entries compared to the other methods.

\begin{figure}[htbp]
\subfigure{
\includegraphics[width = 16cm]{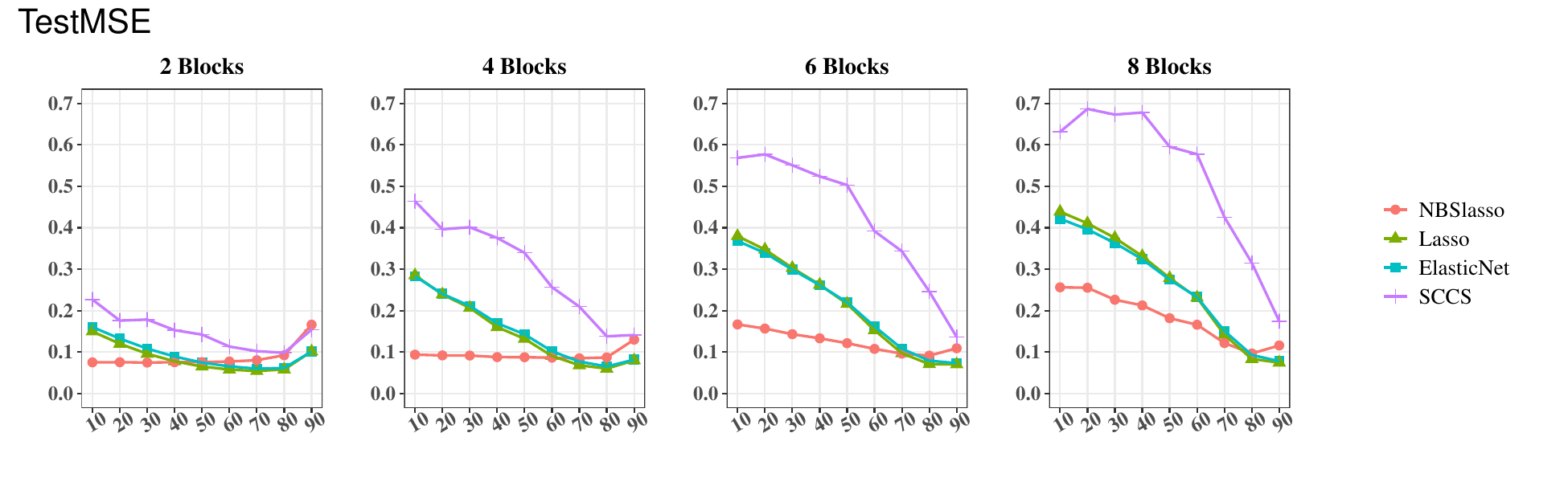}}
\subfigure{
\includegraphics[width = 16cm]{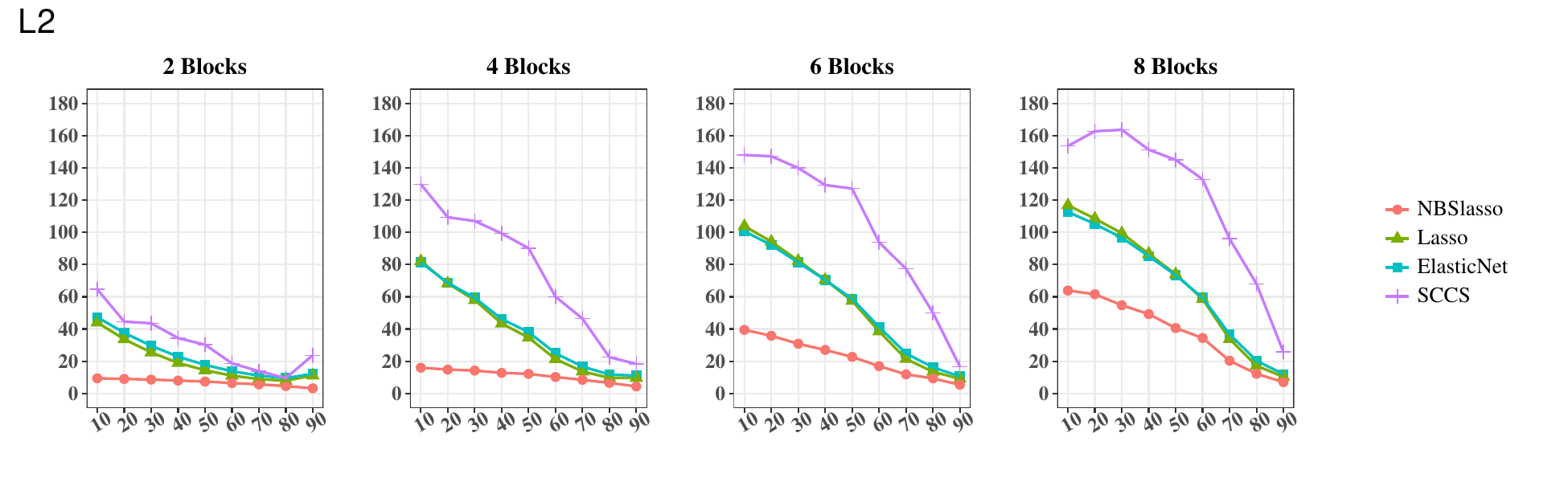}}
\subfigure{
\includegraphics[width = 16cm]{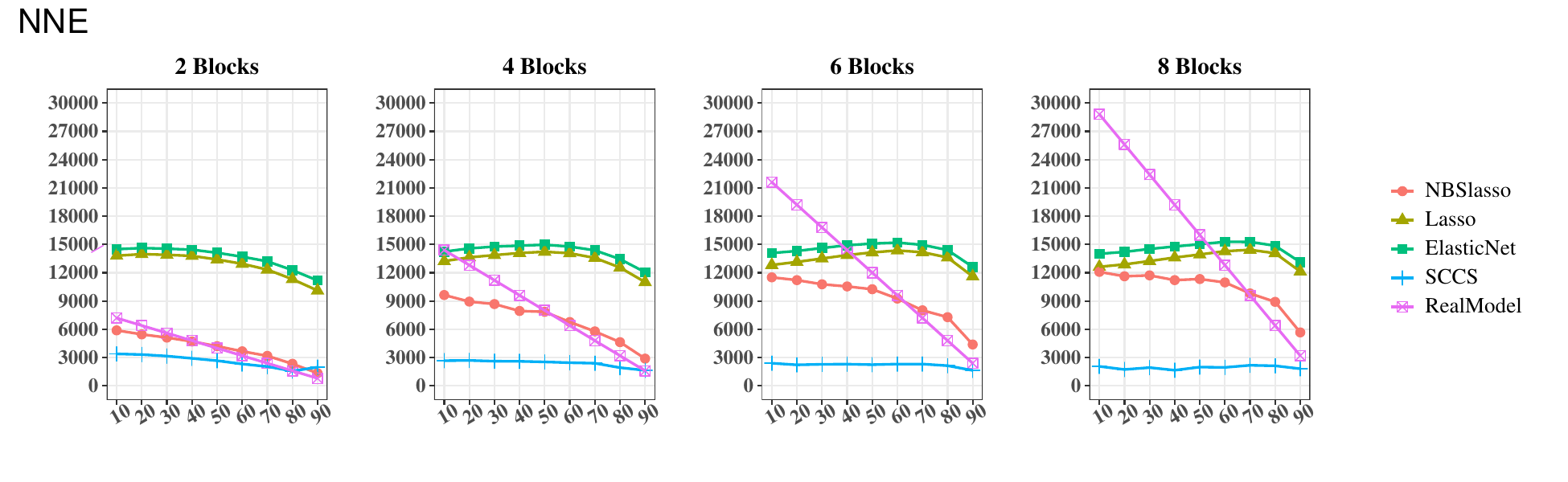}}
\caption{
Performance comparison under dimensions $n=150,p=200,q=200$ and Group setting 1. The title of each figure in each row, indicates the fixed sparsity indicators, $K_j = 2, 4, 6, 8$, respectively. The x-axis represents the sparsity level (\%) within non-zero blocks. \label{general random case figure3}}
\end{figure}

In summary, NBSlasso consistently outperforms the other methods in various scenarios. It demonstrates superior performance in terms of computation time, estimation accuracy, and TestMSE. These results hold regardless of whether the data are sparse or dense and whether the number of non-zero coefficient blocks is randomly set or fixed. The findings indicate that NBSlasso is a robust and effective method for selecting informative blocks and accurately estimating the associated coefficient matrix.

\section{Real example}
In this section, we present the performance of our proposed method on a practical problem of genomic study, which is the interaction between the isoprenoid genes in Arabidopsis thaliano and has been studied in \citet{wille2004sparse}. Isoprenoids serve numerous biochemical functions in plants and are synthesized through the condensation of the five-carbon intermediates isopentenyl diphosphate (IPP) and dimethylallyl diphosphate (DMAPP). The formation of IPP and DMAPP in higher plants has two distinct pathways: one in the cytosol, the mevalonatepatty pathway, and the other in the chloroplast, nonmevalonatepatty pathway. The interaction between these two pathways has been reported by many researchers \citep{laule2003crosstalk, rodrieguez2004distinct}. We examine a data set that comprises 118 GeneChip (Affymetrix) microarrays and contains the expression of 39 genes in the isoprenoid pathways in Arabidopsis thaliano and 795 additional genes from 56 downstream metabolic pathways \citep{wille2004sparse}. The 39 genes in the isoprenoid pathways can be divided into two groups: the mevalonatepatty pathway group containing 21 genes and the nonmevalonatepatty group containing the remaining 18 genes. \citet{wille2004sparse} utilized this data set to construct the genetic regulatory network between the isoprenoid pathways where the downstream genes were attached to the network as conditional variables.

The construction of the genetic regulatory network can be solved by formulating a conditional Gaussian graphical model with group structures $Y = XB + E$. There exists multicollinearity in the 795 additional genes from 56 downstream metabolic pathways and we remove some downstream metabolic pathways, obtaining a model as $(Y_1, Y_2) = (X_1,\cdots, X_{50})B + E$, where $(n, p, q) = (118, 739, 39), J=2$, and $K=50$.
$(Y_1, Y_2)$ represents the isoprenoid genes in the mevalonatepatty and nonmevalonatepatty pathways, and $(X_1,\cdots, X_{50})$ denotes the genes from 50 downstream metabolic pathways. We focus on the selection of the downstream pathways as predictors for the isoprenoid pathways and compare NBSlasso, lasso, elastic net, and SCCS using four metrics: TestMSE (the mean square error in the testing set), NNB (the number of non-zero blocks estimated), NNE (the number of non-zero entries estimated), and computational time.

We set the size of the training set as $n_{train} = (100,90,80,70,60)$. For each $n_{train}$, we randomly split the data set into a training set and a testing set. We estimate $B$ using four methods and compute the TestMSE on the testing set. Each test is repeated 100 times on different splits. The results are shown in table \ref{realeg}. Firstly, we compare the NBSlasso with the SCCS. We observe that the average NBSlasso achieves a lower TestMSE and a greater NNE. This implies that SCCS might overlook some true non-zero entries. Although NBSlasso selects more accurate non-zero entries, it sppears to select significantly more non-zero blocks than SCCS, which is likely attributable to presence of multicollinearity in the predictors. NBSlasso still maintains its computational advantage over SCCS. Secondly, we focus on the differences between NBSlasso and the regularized regression approaches. Both Lasso and elastic net selects almost all the blocks while NBSlasso selects far fewer. NBSlasso obtains a smaller TestMSE with fewer NNE than Lasso and elastic net, indicating that the regularized methods may suffer from the over-fitting problem. The computational time of NBSlasso is comparable to that of the Lasso and elastic net. To conclude, the results show that the NBSlasso detects a more accurate gene set with less computational cost.

\begin{table}[!ht]
\centering
\caption{The mean value of four metrics of the real example (Values in parentheses are standard deviations).\label{realeg}}
\resizebox{\textwidth}{!}{%
\begin{tabular}{cccccccccccc}
\multicolumn{1}{l}{} & \multicolumn{1}{l}{} & \textbf{TestMSE} & \multicolumn{1}{l}{} & \multicolumn{1}{l}{} & \multicolumn{1}{l}{} & \multicolumn{1}{l}{} & \multicolumn{1}{l}{} & \multicolumn{1}{l}{} & \textbf{NNB}  & \multicolumn{1}{l}{} & \multicolumn{1}{l}{} \\ \hline
$n_{train}$                    & NBSlasso             & Lasso            & ElasticNet                  & SCCS                 &                      &                      & $n_{train}$                    & NBSlasso             & Lasso         & ElasticNet                  & SCCS                 \\ \hline
100         & 0.60(0.54)    & 0.62(0.57)                        & 0.62(0.57)     & 0.91(0.83)   &  &  & 100         & 53.94(3.47) & 92.07(2.39)                       & 94.65(1.75) & 4.65(1.79)    \\ \hline
90          & 0.63(0.41)    & 0.63(0.42)                        & 0.63(0.41)     & 0.8(0.51)    &  &  & 90          & 53.62(4.41) & 91.17(2.27)                       & 94.06(2.09) & 3.7(1.69)     \\ \hline
80          & 0.76(0.67)    & 0.79(0.68)                        & 0.78(0.68)     & 1.15(0.98)   &  &  & 80          & 53.09(7.63) & 89.8(2.32)                        & 92.85(2.35) & 3.5(1.32)     \\ \hline
70          & 0.85(0.68)    & 0.89(0.66)                        & 0.87(0.66)     & 1.16(0.68)   &  &  & 70          & 46.11(7.48) & 88.84(2.9)                        & 92.51(2.73) & 2.75(1.25)    \\ \hline
60          & 1.0(0.71)     & 1.0(0.77)                         & 0.99(0.77)     & 1.06(0.39)   &  &  & 60          & 43.71(11.4) & 86.92(3.37)                       & 91.15(2.9)  & 2.65(1.09)    \\ \hline
&                      & \textbf{NNE}     &                      &                      &                      &                      &                      &                      & \textbf{Time(s)} &                      &                      \\ \hline
$n_{train}$                    & NBSlasso             & Lasso            & ElasticNet                  & SCCS                 &                      &                      & $n_{train}$                    & NBSlasso             & Lasso         & ElasticNet                  & SCCS                 \\ \hline
100         & 642.4(73.47)  & 889.0(85.59)                      & 1154.17(85.44) & 231.4(36.85) &  &  & 100         & 5.55(1.19)  & 3.7(0.19)                         & 3.6(0.15)   & 142.03(21.73) \\ \hline
90          & 578.35(78.24) & 811.65(73.31)                     & 1079.11(84.65) & 202.1(32.09) &  &  & 90          & 5.23(1.03)  & 3.44(0.15)                        & 3.38(0.15)  & 132.99(21.19) \\ \hline
80          & 531.24(76.7)  & 733.04(73.54)                     & 994.5(86.21)   & 177.95(36.8) &  &  & 80          & 4.62(1.05)  & 3.2(0.18)                         & 3.17(0.16)  & 108.49(21.41) \\ \hline
70          & 502.6(80.91)  & 665.73(87.15)                     & 921.1(99.38)   & 154.9(29.52) &  &  & 70          & 4.74(0.42)  & 3.42(0.15)                        & 3.52(0.19)  & 95.78(17.37)  \\ \hline
60          & 425.69(72.61) & 569.07(80.56)                     & 824.09(94.91)  & 137.45(29.7) &  &  & 60          & 3.81(0.42)  & 2.82(0.13)                        & 3.22(0.11)  & 93.63(24.37) \\ \hline
\end{tabular}%
}
\end{table}

\section{Conclusion}
In this paper, we focused on a block-selection model and introduced a technique called the NBS to identify the relevance between response and covariate groups. We constructed a high-dimensional model with block structures for both responses and covariates, and the new strategy played a crucial role in detecting these blocks. We demonstrated the uniform consistency of the NBS and proposed three estimators to enhance modeling efficiency under different scenarios. Additionally, we derived their asymptotic properties. The results from both simulations and empirical analysis illustrate that the proposed method effectively addresses the challenges of data complexity, achieving satisfactory estimation and prediction accuracy.

We proposed a novel approach, called the NBS, to investigate the blocking (grouping) effect by introducing an efficient correlation measure. The NBS can be applied in combination with various other regularization techniques for diverse modeling, thus offering extensive application prospects. In further research, several topics are worth exploring, such as utilizing the NBS to detect change points in linear models \citep{li2021multithreshold}, reducing estimation bias caused by hidden variables \citep{bing2022adaptive}, etc.

\section*{Acknowledgments}
This work was supported by the National Natural Science Foundation of China (Grant No. 12371281); the Emerging Interdisciplinary Project, the Fundamental Research Funds, and the Disciplinary Funds in Central University of Finance and Economics.

\appendix
\section*{Appendix A: Additional Simulation Results}

Table \ref{D1G2Random}, \ref{D2G1Random}, and \ref{D2G2Random} compare the performance of the four methods under two dimensions, two group settings, random sparsity indicators of $K_j \in \{2,3,4\}$, and sparsity-levels within each block are $30\%, 60\%$, and $90\%$ respectively.

\begin{table}[h!]
\centering
\caption{Performance comparison under dimensions $n = 150, p = 200, q = 200$, Group
setting 2, random sparsity indicators of $K_j \in \{2,3,4\}$, and sparsity-levels within each
block are 30\%, 60\%, and 90\% respectively.\label{D1G2Random}}
\resizebox{\textwidth}{!}{%
\begin{tabular}{cccccccccc}
\hline
Sparsity & Method\_name & TestMSE    & Precision  & Recall     & L1             & L2            & PDR        & FDR        & Time(s)           \\ \hline
30       & NBS     & 0.12(0.05) & 0.97(0.10) & 1.00(0.00) & 488.38(152.10) & 30.17(15.13)  & 0.93(0.06) & 0.23(0.04) & 16.36(2.97)    \\ \cline{2-10}
& SCCS         & 0.95(0.09) & 1.00(0.00) & 0.22(0.33) & 1311.35(53.15) & 193.63(4.67)  & 0.03(0.03) & 0.04(0.02) & 102.39(114.56) \\ \cline{2-10}
& ElasticNet       & 0.31(0.04) & 0.37(0.04) & 1.00(0.00) & 951.62(98.32)  & 87.17(10.45)  & 0.70(0.05) & 0.51(0.04) & 24.37(0.90)    \\ \cline{2-10}
& Lasso        & 0.32(0.05) & 0.37(0.04) & 1.00(0.00) & 946.62(106.19) & 89.00(11.74)  & 0.67(0.05) & 0.49(0.04) & 24.79(1.05)    \\ \hline
60       & NBS     & 0.09(0.01) & 0.97(0.07) & 1.00(0.00) & 310.74(59.08)  & 16.16(3.89)   & 0.98(0.01) & 0.43(0.02) & 14.37(2.69)    \\ \cline{2-10}
& SCCS         & 0.56(0.21) & 0.99(0.02) & 0.81(0.35) & 772.75(201.54) & 131.96(34.49) & 0.27(0.14) & 0.11(0.04) & 416.76(188.74) \\ \cline{2-10}
& ElasticNet       & 0.15(0.03) & 0.37(0.04) & 1.00(0.00) & 562.59(67.27)  & 39.75(7.53)   & 0.90(0.03) & 0.65(0.03) & 20.72(0.93)    \\ \cline{2-10}
& Lasso        & 0.14(0.03) & 0.37(0.04) & 1.00(0.00) & 525.99(70.41)  & 36.54(8.04)   & 0.90(0.03) & 0.64(0.03) & 20.66(1.13)    \\ \hline
90       & NBS     & 0.08(0.01) & 0.95(0.12) & 1.00(0.00) & 121.68(17.66)  & 5.79(0.53)    & 1.00(0.00) & 0.71(0.04) & 12.14(1.27)    \\ \cline{2-10}
& SCCS         & 0.15(0.02) & 0.95(0.03) & 0.96(0.03) & 128.92(22.38)  & 21.24(4.71)   & 0.77(0.04) & 0.14(0.01) & 356.60(11.75)  \\ \cline{2-10}
& ElasticNet       & 0.06(0.00) & 0.38(0.03) & 1.00(0.00) & 237.23(6.91)   & 9.13(0.46)    & 1.00(0.00) & 0.88(0.01) & 13.82(0.47)    \\ \cline{2-10}
& Lasso        & 0.05(0.00) & 0.38(0.03) & 1.00(0.00) & 209.71(7.29)   & 7.54(0.44)    & 1.00(0.00) & 0.87(0.01) & 13.38(0.56)    \\ \hline
\end{tabular}%
}
\end{table}

\begin{table}[h!]
\centering
\caption{Performance comparison under dimensions $n = 150, p = 400, q = 200$, Group
setting 1, random sparsity indicators of $K_j \in \{2,3,4\}$, and sparsity-levels within each
block are 30\%, 60\%, and 90\% respectively. \label{D2G1Random}}
\resizebox{\textwidth}{!}{%
\begin{tabular}{cccccccccc}
\hline
sparsity & method\_name & testMSE    & precision  & recall     & L1             & L2           & PDR        & FDR        & Time(s)          \\ \hline
30       & NBS     & 0.11(0.01) & 1.00(0.00) & 1.00(0.00) & 360.19(37.41)  & 19.40(2.41)  & 0.96(0.01) & 0.20(0.00) & 13.19(1.63)   \\ \cline{2-10}
& SCCS         & 0.54(0.03) & 0.99(0.01) & 1.00(0.00) & 1004.12(70.86) & 145.32(9.62) & 0.25(0.03) & 0.05(0.00) & 375.43(23.17) \\ \cline{2-10}
& ElasticNet       & 0.49(0.04) & 0.16(0.01) & 1.00(0.00) & 1042.48(63.96) & 123.01(7.65) & 0.51(0.03) & 0.60(0.02) & 33.70(0.62)   \\ \cline{2-10}
& Lasso        & 0.49(0.04) & 0.16(0.01) & 1.00(0.00) & 1027.38(67.62) & 123.89(8.38) & 0.47(0.04) & 0.57(0.02) & 33.31(0.73)   \\ \hline
60       & NBS     & 0.09(0.00) & 1.00(0.00) & 1.00(0.00) & 235.70(24.42)  & 12.32(1.35)  & 0.98(0.00) & 0.39(0.01) & 11.43(1.50)   \\ \cline{2-10}
& SCCS         & 0.33(0.05) & 0.98(0.02) & 1.00(0.00) & 536.66(73.78)  & 84.35(13.63) & 0.46(0.05) & 0.10(0.01) & 420.40(24.21) \\ \cline{2-10}
& ElasticNet       & 0.28(0.04) & 0.15(0.01) & 1.00(0.00) & 651.78(55.25)  & 68.13(8.71)  & 0.78(0.03) & 0.72(0.02) & 29.70(0.93)   \\ \cline{2-10}
& Lasso        & 0.26(0.04) & 0.15(0.01) & 1.00(0.00) & 601.59(59.07)  & 61.89(9.50)  & 0.77(0.04) & 0.69(0.02) & 28.67(1.03)   \\ \hline
90       & NBS     & 0.09(0.01) & 1.00(0.00) & 1.00(0.00) & 81.37(9.45)    & 4.90(0.48)   & 1.00(0.00) & 0.58(0.01) & 10.60(1.55)   \\ \cline{2-10}
& SCCS         & 0.11(0.01) & 0.90(0.05) & 1.00(0.00) & 78.16(15.87)   & 9.92(2.69)   & 0.89(0.02) & 0.18(0.01) & 241.01(24.90) \\ \cline{2-10}
& ElasticNet       & 0.08(0.00) & 0.15(0.02) & 1.00(0.00) & 227.34(10.68)  & 10.42(0.78)  & 1.00(0.00) & 0.90(0.01) & 20.06(0.48)   \\ \cline{2-10}
& Lasso        & 0.07(0.00) & 0.15(0.02) & 1.00(0.00) & 186.20(8.53)   & 7.30(0.48)   & 1.00(0.00) & 0.89(0.01) & 18.92(0.57)   \\ \hline
\end{tabular}%
}
\end{table}

\newpage

\begin{table}[h!]
\centering
\caption{Performance comparison under dimensions $n = 150, p = 400, q = 200$, Group
setting 2, random sparsity indicators of $K_j \in \{2,3,4\}$, and sparsity-levels within each
block are 30\%, 60\%, and 90\% respectively.  \label{D2G2Random}}
\resizebox{\textwidth}{!}{%
\begin{tabular}{cccccccccc}
\hline
sparsity & method\_name & testMSE    & precision  & recall     & L1             & L2            & PDR        & FDR        & Time(s)           \\ \hline
30       & NBS     & 0.36(0.14) & 0.50(0.30) & 1.00(0.00) & 900.86(290.27) & 90.75(39.40)  & 0.63(0.19) & 0.43(0.14) & 24.93(5.74)    \\ \cline{2-10}
& SCCS         & 0.94(0.12) & 1.00(0.00) & 0.23(0.34) & 1277.16(77.18) & 189.25(7.00)  & 0.03(0.05) & 0.04(0.01) & 90.56(105.62)  \\ \cline{2-10}
& ElasticNet       & 0.53(0.04) & 0.19(0.02) & 1.00(0.00) & 1145.51(81.08) & 131.06(6.61)  & 0.45(0.04) & 0.57(0.02) & 35.94(1.15)    \\ \cline{2-10}
& Lasso        & 0.54(0.04) & 0.19(0.02) & 1.00(0.00) & 1136.09(84.93) & 133.16(7.32)  & 0.41(0.04) & 0.55(0.02) & 35.21(1.11)    \\ \hline
60       & NBS     & 0.22(0.07) & 0.49(0.29) & 1.00(0.01) & 555.79(168.86) & 49.18(20.87)  & 0.83(0.09) & 0.56(0.10) & 21.83(4.65)    \\ \cline{2-10}
& SCCS         & 0.71(0.26) & 1.00(0.00) & 0.51(0.47) & 819.24(184.98) & 150.48(37.91) & 0.18(0.17) & 0.07(0.02) & 232.69(193.38) \\ \cline{2-10}
& ElasticNet       & 0.35(0.05) & 0.19(0.02) & 1.00(0.00) & 772.40(77.91)  & 84.33(9.99)   & 0.69(0.05) & 0.68(0.02) & 32.41(1.52)    \\ \cline{2-10}
& Lasso        & 0.34(0.05) & 0.19(0.02) & 1.00(0.00) & 731.63(84.22)  & 80.25(11.48)  & 0.68(0.06) & 0.65(0.03) & 31.26(1.24)    \\ \hline
90       & NBS     & 0.08(0.01) & 0.39(0.21) & 1.00(0.00) & 161.61(23.37)  & 7.18(0.82)    & 1.00(0.00) & 0.81(0.07) & 15.26(1.51)    \\ \cline{2-10}
& SCCS         & 0.16(0.02) & 0.97(0.04) & 0.95(0.04) & 126.73(18.50)  & 22.05(5.08)   & 0.76(0.04) & 0.11(0.01) & 335.57(14.61)  \\ \cline{2-10}
& ElasticNet       & 0.08(0.01) & 0.19(0.01) & 1.00(0.00) & 246.47(10.84)  & 11.68(0.91)   & 1.00(0.00) & 0.88(0.01) & 21.81(1.11)    \\ \cline{2-10}
& Lasso        & 0.07(0.01) & 0.19(0.01) & 1.00(0.00) & 202.05(8.87)   & 8.16(0.60)    & 1.00(0.00) & 0.87(0.01) & 19.88(0.55)    \\ \hline
\end{tabular}%
}
\end{table}

% \appendix
\section*{Appendix B: Proofs}

\begin{proof}[Proof of Proposition~\ref{prop delta}]
Recall that $\hat \Delta = \{\hat \Delta_{kj}\}_{K \times J}$, where
\begin{equation}\label{eq minimizer}
\hat \Delta_{kj} = \argmin_{\Delta_{kj} \in \mathcal{V}}
\Big(\frac{\left\|Y_{j}-P_{X_{k}} Y_{j}\right\|_{F}^{2}}{n-p_k-1} \Delta_{k j}
+\gamma \frac{\left\|P_{X_{k}} Y_{j}\right\|_{F}^{2}}{p_k - 1}\left(1-\Delta_{k j}\right)
\Big).
\end{equation}
The uniqueness of the solution to the above 0-1 integer optimization depends on the uniqueness of $P_{X_k}$, which is divided into two parts. When $p_k \leqslant n $, we have
\[P_{X_k} = X_k(X_k^\t X_k)^{-1}X_k^\t,\]
and the uniqueness of the above equation is certain. When  $p_k > n $,
\[P_{X_{\hat S_{kj}}} = X_{\hat S_{kj}}(X_{\hat S_{kj}}^\t X_{\hat S_{kj}})^{-1}X_{\hat S_{kj}}^\t,\]
and the uniqueness is based on the active set $\hat S_{kj}$. In this paper, we obtain the active set using the Lasso. Based on its properties \citep{tibshirani2012lasso}, the uniqueness of $P_{X_{\hat S_{kj}}}$ is proven. We can equivalently define $\Delta$, in terms of $\Delta_{kj}$, by the following optimization problem:
\[ \hat \Delta =\underset{\Delta \in \mathcal{V}^{K \times J}}{\operatorname{argmin}}~Q(\Delta), \]
where
\[Q( \Delta )= \frac{1}{KJ} \sum_{j=1}^{J} \sum_{k=1}^{K}
\Big(  \frac{1}{n-p_k-1} \left\|Y_{j}-P_{X_{k}} Y_{j}\right\|_{F}^{2} \Delta_{k j}
+\gamma \frac{1}{p_k - 1}\left\|P_{X_{k}} Y_{j}\right\|_{F}^{2}\left(1-\Delta_{k j}\right)
\Big).\]
The uniqueness of $\Delta$ is directly obtained when all the minimizers of \eqref{eq minimizer} are unique.
\end{proof}

\begin{proof}[Proof of Theorem~\ref{thm selection consistency}]
Recall that definitions $J_1 = \{(k, j): \Delta_{kj} = 1\}$ and $J_0 = \{ (k,j): \Delta_{kj} = 0 \}$. Proving $\hat \Delta  = \Delta$ with probability tending to $1$ is equivalent to proving $\hat J_1 = J_1$ and $\hat J_0 = J_0$ with probability tending to $1$. Since \( \hat{J}_0 = \hat{J}_1^c \) and \( J_0 = J_1^c \), it suffices to show \( \hat{J}_1 = J_1 \).

Recall the definition of $l_{kj}$:
\[l_{kj} = \frac{\left\| Y_j - P_{X_k} Y_j \right\|_F^2}{n - p_k - 1} \bigg/ \frac{\left\| P_{X_k} Y_j \right\|_F^2}{p_k - 1}.\]
By Proposition \ref{prop the order of lkj}, for any $(k,j) \in J_1$ and $(k',j') \in J_0$, we have, with probability tending to 1 as $n \to \infty$,
$$l_{kj} = O(n^{3\xi/2-1}) ~\text{and}~ l_{k'j'} = O(n^{\xi/2-\eta}).$$
Since $(\xi/2-\eta) - (3\xi/2-1) = 1 - \xi/2 - \eta > 0$, it follows that as $n \to \infty$,
$$\frac{\min{l_{k'j'}}}{\max{l_{kj}}} \to \infty.$$
We claim the existence of a sequence $\gamma(n)$ such that for sufficiently large $n$:
$$
\max{l_{kj}} \le \gamma(n) \le \min{l_{k'j'}}.
$$
This ensures that $J_1 \subset \hat{J_1}$ and $J_0 \subset \hat{J_0}$, which implies $\hat J_1 = J_1$.
\end{proof}

\begin{proof}[Proof of Proposition~\ref{prop gamma}]

Based on the definition of $\bar{R}_{kj}^2$, we have
\[\bar{R}_{kj}^2 \triangleq 1 - \frac{p_k - 1}{n-p_k-1} \cdot \frac{{\left\|Y_{j}-P_{X_{k}} Y_{j}\right\|_F^{2}}}{{\left\|P_{X_{k}} Y_{j}\right\|_{F}^{2}}},\]
and $l_{kj} = 1 - \bar{R}_{kj}^2$ and $c + \gamma = 1$. When $\Delta_{kj} = 0$, $\bar{R}_{kj}^2$ can be calculated as follows:
\begin{align*}
\bar{R}_{kj}^2
&= 1- \dfrac{\|P_{X_k}^\perp(Y_j - E_j)\|_F^2
+ \langle  2P_{X_k}^\perp(Y_j-E_j), P_{X_k}^\perp E_j \rangle
+ \langle P_{X_k}^\perp E_j, P_{X_k}^\perp E_j \rangle
}{
\|P_{X_{k}}(Y_j - E_j)\|_F^2
+ \langle 2P_{X_k}(Y_j-E_j),P_{X_k}E_j \rangle
+ \langle P_{X_k}E_j,P_{X_k}E_j \rangle
} \cdot \dfrac{p_k - 1}{n - p_k - 1} \\
&= \dfrac{
(n - p_k - 1)(\|P_{X_{k}}(Y_j - E_j)\|_F^2) - (p_k-1)\|P_{X_k}^\perp(Y_j - E_j)\|_F^2
}{
\left [ \|P_{X_{k}}(Y_j - E_j)\|_F^2
+ \langle 2P_{X_k}(Y_j-E_j),P_{X_k}E_j \rangle
+ \langle P_{X_k}E_j,P_{X_k}E_j \rangle) \right ]
\cdot(n - p_k - 1)
} \\
&+
\dfrac{
(n - p_k - 1)\langle 2P_{X_k}(Y_j-E_j),P_{X_k}E_j \rangle
- (p_k-1)\langle  2P_{X_k}^\perp(Y_j-E_j), P_{X_k}^\perp E_j \rangle
}{
\left [ \|P_{X_{k}}(Y_j - E_j)\|_F^2
+ \langle 2P_{X_k}(Y_j-E_j),P_{X_k}E_j \rangle
+ \langle P_{X_k}E_j,P_{X_k}E_j \rangle) \right ]
\cdot(n - p_k - 1)
} \\
&+
\dfrac{
(n - p_k - 1)\langle P_{X_k}E_j,P_{X_k}E_j \rangle
- (p_k-1)\langle P_{X_k}^\perp E_j, P_{X_k}^\perp E_j \rangle
}{
\left [ \|P_{X_{k}}(Y_j - E_j)\|_F^2
+ \langle 2P_{X_k}(Y_j-E_j),P_{X_k}E_j \rangle
+ \langle P_{X_k}E_j,P_{X_k}E_j \rangle) \right ]
\cdot(n - p_k - 1)
}.
\end{align*}
Using Assumption \ref{assumption enhanced symmetry}, the first term on the right-hand side of the equality approximates $0$, while the second and third terms are symmetrically distributed around zero. Thus, we have
\begin{align*}
\mathbb{P}((k,j) \in I(c))
&= \mathbb{P}(\bar{R}_{kj}^2 > c, \Delta = 0) \\
&\approx \mathbb{P}(\bar{R}_{kj}^2 < -c, \Delta = 0) \\
&= \mathbb{P}(1 - l_{kj} < \gamma - 1, \Delta = 0) \\
&=\mathbb{P}(l_{kj} > 2-\gamma) - \mathbb{P}(l_{kj} > 2-\gamma, \Delta = 1)\\
& \approx \mathbb{P}(l_{kj} > 2 - \gamma),
\end{align*}
where the last approximate sign holds since Proposition~\ref{prop the order of lkj} establishes that \(l_{kj} \to 0\) when \(\Delta_{kj} = 1\), the second term vanishes asymptotically. For \(\gamma < 1\), we observe:
\[\mathbb{P}(l_{kj} > 2 - \gamma) \leq \mathbb{P}(l_{kj} > \frac{2 - \gamma}{\gamma}).\]
Rewriting \(l_{kj} = 1 - \bar{R}_{kj}^2\), we have:
\[\mathbb{P}(l_{kj} > \frac{2 - \gamma}{\gamma}) = \mathbb{P}(\bar{R}_{kj}^2 < 1 - \frac{2 - \gamma}{\gamma}).\]
Finally, substituting $c = 1 - \gamma$, we obtain:
\[\mathbb{P}((k,j) \in I(c)) \leq \mathbb{P}(\bar{R}_{kj}^2 < \frac{2c}{c - 1}),\]
which completes the proof.
\end{proof}

\begin{proof}[Proof of Proposition~\ref{prop equiv}]
Set $(\hat \Delta_{11}, \hat B_{\Delta_{11}})$ to be the solution of the following optimization:
\[ (\hat \Delta_{11},\hat B_{\Delta_{11}})= \argmin \Big\{\frac{1}{n_0}\left\|\Delta_{11} \bullet Y_1- X_1B_{\Delta_{11}} \right\|_{F}^{2}
+\gamma W_{11} \left(1-\Delta_{11}\right)   \Big\},
\]
where $n_0 = n-p_1-1$ and $W_{11} = \|  P_{X_1} Y_1 \|_{F}^{2}/(p_1 - 1)$. Since $P_{X_1} = X_1 (X_1^\t X_1)^{-1} X^\t_1$, we have $X_1 \hat B_{\Delta_{11}} = P_{X_1} Y_1 \hat \Delta_{11}$ and then replace the former by the latter in the above function. The above optimization can be transferred as follows
\begin{align*}
& \frac{\left\|\Delta_{11}\bullet Y_1- X_1B_{\Delta_{11}} \right\|_{F}^{2}}{n-p_1-1}
+\gamma W_{11} \left(1-\Delta_{11}\right)  \\
& =  \frac{\left\|\Delta_{11} \bullet Y_1-  P_{X_1} Y_1 \Delta_{11} \right\|_{F}^{2}}{n-p_1-1}
+\gamma \frac{\left\| P_{X_1} Y_1 \right\|_{F}^{2}}{p_1 - 1}\left(1-\Delta_{11}\right)\\
& =  \frac{\left\| Y_1-  P_{X_1} Y_1 \right\|_{F}^{2}}{n-p_1-1} \Delta_{11}
+\gamma \frac{\left\| P_{X_1} Y_1 \right\|_{F}^{2}}{p_1 - 1}\left(1-\Delta_{11}\right)
\end{align*}
Thus, the solutions to the above two functions are equal, and the regression coefficient estimator is unique when $\hat \Delta_{11}$ = 1, i.e.,
$$\hat B_{11} = (X^\t_1 X_1 )^{-1} X_1^\t Y_1.$$
When $\Delta_{11} = 0$, all the elements of coefficient matrix $B$ equal $0$. We allow this situation to exist and expect $\hat \Delta_{11} = 0$ as well as $\hat B = 0$. Thus, the estimator is given as follows:
\[\hat B_{\Delta 11} = (X_1^\t X_1)^{-1} X^\t_1 Y_1 \bullet \hat \Delta_{11}. \]
\end{proof}

\begin{proof}[Proof of Theorem \ref{thm ols}]
Based on Proposition \ref{prop equiv}, the solution to \eqref{eq 11} is unique given by
\[\hat B_{\Delta 11} = (X_1^\t X_1)^{-1} X^\t_1 Y_1 \bullet \hat \Delta_{11}. \]
When we consider the single-block model, to establish that
\[\hat B_{\Delta 11} = (X_1^\t X_1)^{-1} X^\t_1 Y_1 \bullet \Delta_{11}, \]
it is equivalent to proving the block selection consistency, i.e.,
\[ \hat \Delta_{11} = \Delta_{11}, \]
with high probability. For this, we define
\[l_{11} = \dfrac{\left\|Y_{1}-P_{X_{1}} Y_{1}\right\|_{F}^{2}}{n - p_1 - 1} / \dfrac{\left\|P_{X_{1}} Y_{1}\right\|_{F}^{2}}{p_1 - 1},\]
where $P_{X_1}$ is the projection matrix associated with $X_1$. To prove block selection consistency, it suffices to prove the following two parts.

(i) When $\Delta_{11} = 1$, we have $Y_1 = X_1B_{1} + E_1 = T_{11} + E_1$ and obtain:
\begin{align*}
E\|P_{X_1}Y_1\|^2_F = \langle T_{11},T_{11} \rangle
+ E\langle P_{X_1}E_1,P_{X_1}E_1 \rangle
= O(n) + O(n^{\xi}) = O(n),
\end{align*}
where the last equality holds since
\begin{align*}
E\langle P_{X_1}E_1,P_{X_1}E_1 \rangle
\le E\tr(P_{X_1}) \cdot \max_{i} \lambda_i(E_1E_1^\t) = O(n^\xi).
\end{align*}
Thus
$$
E\dfrac{\left\|P_{X_{1}} Y_{1}\right\|_{F}^{2}}{p_1 - 1} = O(n^{1-\xi}).
$$
For the error term, we have
\begin{align*}
E(\|Y_1 - P_{X_1} Y_1 \|_F^2)  \nonumber
= E\langle (I_n-P_{X_1})E_1,(I_n-P_{X_1})E_1 \rangle = O(n-p_1),
\end{align*}
and
$$
E\dfrac{\left\|Y_{1}-P_{X_{1}} Y_{1}\right\|_{F}^{2}}{n -p_1-1} = O(1).
$$
Therefore we have, for $0 < \xi < 1/2$,
\begin{align*}
l_{11}  = \dfrac{\left\|Y_{1}-P_{X_{1}} Y_{1}\right\|_{F}^{2}}{n -p_1-1} / \dfrac{\left\|P_{X_{1}} Y_{1}\right\|_{F}^{2}}{p_1 - 1} = O(n^{\xi - 1}),
\end{align*}
% 我完全看不出从前面的结果如何推到这一步的
% 漫长的证明里，从来没提过l_{11}，它怎么能在提出的时候没有定义，P_X也一样，也需要增加定义，增加进去。
implying that there exists a sequence $\gamma(n)$ such that for large $n$, $l_{11} < kn^{\xi - 1} < \gamma(n)$ for some $k$. %我们之前那个假设，同时也是tuning parameter \gamma 的选择范围，现在仍然需要给出，原则上\gamma是可以随着n趋于无穷趋于零的，因此这里实际是给出\gamma的取值范围例如 \gamma > kn^{\xi - 1}

(ii) When $\Delta_{11} = 0$, we have $Y_1 = E_1$ and
\begin{align*}
E\|X_1 \hat B_{11}\|^2_F &=  E\|P_{X_1}Y_1\|^2_F
=E\langle P_{X_1}E_1,P_{X_1}E_1 \rangle
= O(n^{\xi}).
\end{align*}
Thus
$$
E\dfrac{\left\|P_{X_{1}} Y_{1}\right\|_{F}^{2}}{p_1 - 1} = O(1).
$$
For the error term,
\begin{align*}
E\|Y_1 - X_1 \hat B_{11}\|^2_F = E\langle (I_n-P_{X_1})E_1,(I_n-P_{X_1})E_1 \rangle = O(n-p_1),
\end{align*}
and
$$
E\dfrac{\left\|Y_{1}-P_{X_{1}} Y_{1}\right\|_{F}^{2}}{n -p_1-1} = O(1).
$$
Therefore we have $l_{11} = O(1)$. For the sequence $kn^{\xi - 1} < \gamma(n)$, if $\gamma(n) \to 0$, then for for large $n$, $l_{11} = O(1) > \gamma(n)$.
% 这里的问题跟前面一样，l_11的结果需要明确，借此确定\gamma的上界
\end{proof}

\begin{proof}[Proof of Proposition \ref{prop two}]
Recall that we set $(\hat \Delta_{11}, \hat B_{\Delta_{11}})$ to be the solution of the following optimization:
\begin{equation}\label{eq target two}
(\hat \Delta_{11},\hat B_{\Delta_{11}}) =  \argmin_{\Delta_{11} \in \{0,1\}, B_{11}: \beta^{11}_{ll'} = 0,  l \notin \hat S_{11}}  \Big\{\frac{\left\|\Delta_{11} \bullet Y_1- X_1B_{\Delta_{11}} \right\|_{F}^{2}}{n-p_1-1}
+\gamma W_{11} \left(1-\Delta_{11}\right) \Big\},
\end{equation}
where $B_{\Delta_{11}} = B_{11}\bullet\Delta_{11} $ and $W_{11} = \|P_{X_1} Y_1\|^2_F/(p_1-1)$. Based on the definition of $P_{X_1}$ that
\[P_{X_1} =  X_{\hat S_{11}}(X_{\hat S_{11}}^TX_{\hat S_{11}})^{-1}X_{\hat S_{11}}^T, \]
we have $X_1 \hat B_{\Delta_{11}} = P_{X_1} Y_1 \hat \Delta_{11}$ and the following transforming holds:
\begin{align*}
&  \argmin_{\Delta_{11} \in \{0,1\}, B_{11}: \beta^{11}_{ll'} = 0,  l \notin \hat S_{11}}  \frac{\left\|\Delta_{11}\bullet Y_1- X_1B_{\Delta_{11}} \right\|_{F}^{2}}{n-p_1-1}
+\gamma W_{11} \left(1-\Delta_{11}\right)  \\
& =   \argmin_{\Delta_{11} \in \{0,1\}}  \frac{\left\|\Delta_{11} \bullet Y_1-  P_{X_1} Y_1 \Delta_{11} \right\|_{F}^{2}}{n-p_1-1}
+\gamma \frac{\left\| P_{X_1} Y_1 \right\|_{F}^{2}}{p_1 - 1}\left(1-\Delta_{11}\right)\\
& =  \argmin_{\Delta_{11} \in \{0,1\}} \frac{\left\| Y_1-  P_{X_1} Y_1 \right\|_{F}^{2}}{n-p_1-1} \Delta_{11}
+\gamma \frac{\left\| P_{X_1} Y_1 \right\|_{F}^{2}}{p_1 - 1}\left(1-\Delta_{11}\right).
\end{align*}
Then we discuss the uniqueness of the active set $\hat S_{11}$. Since it is obtained from the lasso for the single-block model, based on the properties of the lasso, the uniqueness of $X_1 \hat B_{\text{lasso}}$ implies the uniqueness of $\hat S_{11}$ \citep{tibshirani2012lasso}.
\end{proof}

\begin{proof}[Proof of Theorem \ref{thm lasso}]
Based on Proposition~\ref{prop two}, the estimator is uniquely presented as follows,
\[\hat B_{\Delta 11,\hat S_{11}} = (X_{\hat S_{11}}^TX_{\hat S_{11}})^{-1}X_{\hat S_{11}}^T Y_1 \bullet \hat \Delta_{11}. \]
We aim to prove that the above estimator converges to the oracle solution of the single-block model. The proof is divided into two parts: Part 1. The active set equals the true nonzero set, i.e., $\hat S_{11} = S_{11}$, where $S_{11}$ denotes the set of relevant covariates and $\hat S_{11}$ denotes its estimate, with high probability. Part 2. The block selection consistency under the high-dimensional setting.

Part 1. We first prove the first party by showing that $\hat S_{11} = S_{11}$ with high probability in two cases. The lasso estimator $\hat S_{11}$ satisfies consistency properties similar to those in the regular regression model. By Lemma A.2 in \citep{yyh2020two}, the restricted eigenvalue condition holds in the multi-response setting. Invoking Corollary 2 of \citep{negahban2012unified}, we have, with high probability and some positive constant $M_2$,
\[\|\hat B_{11} - B_{11}\|_F^2 \leq \frac{M_2^2 |S_{11}| \log p_1}{n}.\]
The above results holds naturally considering the properties of the lasso solution under the multi-response model, which can be viewed as the sum of separately solving the lasso under each single-response model. Thus, the above result leads to
\[\|\hat B_{11} - B_{11}\|_\infty \leqslant \sqrt{M^2_2 \cdot |S_{11}| \log p_1 / n} \leqslant M_2 n^{(c_1 + c_2 - 1)/2}. \]
Combining this with the assumption of the minimal nonzero entries of the coefficient matrix,
\[ |B_{\min}| \geqslant M_2 \sqrt{n^{c_1 + c_2-1}}, \]
we have, with probability tending to 1, that
\[\hat S_{11} = S_{11}.\]
% Case 2: $S_{11} = \emptyset$. In this case, based on Theorem `' $\hat S_{11} = \emptyset$ holds naturally.

Part 2. Following the arguments as the proof of Theorem~\ref{thm ols}, we first prove that when $\Delta_{11} = 1$,
\begin{equation}\label{eq lasso 1}
P(l_{11} \ge \gamma) \rightarrow 0.
\end{equation}
Based on the event that $\{\hat S_{11} = S_{11}\}$ and Assumption \ref{assumption p'},\ref{assumption Ej} and \ref{assumption sparsity}, we have
\begin{align*}
E\|X_1 \hat B_{11}\|^2_F =  E\|P_{X_1}Y_1\|^2_F
&=\langle T_{11},T_{11} \rangle
+ E\langle P_{X_1}E_1,P_{X_1}E_1 \rangle \\
&= \langle X_{S_{11}}B_{S_{11}},(X_{S_{11}}B_{S_{11}})^\t \rangle
+ E\langle P_{X_{S_{11}}}E_1,P_{X_{S_{11}}}E_1 \rangle \\
&= O(n) + O(n^{c_2}) = O(n).
\end{align*}
Similarly,
\begin{align*}
E\|Y_1 - X_1 \hat B_{11}\|^2_F =  E\|Y_1 - P_{X_1}Y_1\|^2_F
&=E\langle (I_n - P_{X_1})E_1,(I_n - P_{X_1})E_1 \rangle \\
&= O(n - |S_{11}|)
\end{align*}
Thus,
\begin{align*}
E\dfrac{\left\|P_{X_{1}} Y_{1}\right\|_{F}^{2}}{|S_{11}| - 1} = O(n^{1-c_2}),\\
E\dfrac{\left\|Y_{1}-P_{X_{1}} Y_{1}\right\|_{F}^{2}}{n -|S_{11}|-1} = O(1),
\end{align*}
and
\begin{align*}
l_{11} = \frac{{\|Y_{1}-P_{X_1} Y_1\|_F^{2}}}{{\|P_{X_1} Y_1\|_{F}^{2}}} \cdot
\frac{|S_{11}|-1}{n-|S_{11}|-1} \approx O(n^{c_2-1}) \to 0,
\end{align*}
implying that there exists a sequence $\gamma(n)$ such that for large $n$, $l_{11} < kn^{c_2 - 1} < \gamma(n)$ for some $k$, leading to $P(l_{11} \ge \gamma(n)) \rightarrow 0$.
% 跟前面的定理证明问题一样

When $\Delta_{11} = 0$, we follow the above argument and the argument of the proof of Theorem~\ref{thm ols} and have
\begin{align*}
E\|P_{X_1}Y_1\|^2_F &= O(n^{c_2}) \\
E\|Y_1 - P_{X_1}Y_1\|^2_F &= O(n).
\end{align*}
Thus
$$l_{11} = \frac{{\|Y_{1}-P_{X_1} Y_1\|_F^{2}}}{{\|P_{X_1} Y_1\|_{F}^{2}}} \cdot
\frac{|S_{11}|-1}{n-|S_{11}|-1} \approx O(1),$$
For the sequence $kn^{\xi - 1} < \gamma(n)$, if $\gamma(n) \to 0$, then for for large $n$, $l_{11} \approx O(1) > \gamma(n)$, leading to $P(l_{11} \le \gamma(n)) \to 0$.
%leads to $P(l_{11} \le \gamma) \rightarrow 0$ for some small enough $\gamma$.

\iffalse
Based on the assumption that the smallest eigenvalue of $X^\t_1 X_1/n $ is larger than $\Lambda_{\min}$, we have
\begin{align*}
& \frac{{\|Y_{1}-P_{X_1} Y_1\|_F^{2}}}{{\|P_{X_1} Y_1\|_{F}^{2}}} \cdot
\frac{|S_{11}|-1}{n-|S_{11}|-1}
\approx \frac{|S_{11}|}{n-|S_{11}|} \cdot \frac{{\|Y_{1}-P_{X_1} Y_1\|_F^{2}}}{{\|P_{X_1} Y_1\|_{F}^{2}}}\\
&= \frac{|S_{11}|}{n-|S_{11}|} \cdot \frac{(n  - |S_{11}|) \cdot   \hat \sigma^2_{1,\text{sum}}}{ |S_{11}|\cdot ( n\tr(B^\t_{11} X^\t_1 X_1 B_{11} )/|S_{11}|  + \hat \sigma^2_{1,\text{sum}}) }\\
& \leqslant  \dfrac{\hat \sigma^2_{1,\text{sum}}}{\hat \sigma^2_{1,\text{sum}} + n \cdot \Lambda_{\min} \tr(B^\t_{11} B_{11})  / |S_{11}|   }.
\end{align*}
Based on the assumption of the gap between the nonzero entries of $B_{11}$ and $0$, we have
\begin{align*}
n \cdot \Lambda_{\min} \tr(B^\t_{11} B_{11})  /|S_{11}|  \geqslant n^{1-c_1} \Lambda_{\min} B_{\min} \geqslant M_2 \Lambda_{\min} n^{(1+c_1)/2}  \geqslant \hat \sigma^2_{1,\text{sum}}.
\end{align*}
With $\gamma >  1/2$, we conclude \eqref{eq lasso 1} holds. In the meantime, following the above argument and the argument of the proof of Theorem~\ref{thm ols}, we also obtain the following result: when $\Delta_{11} = 0$,
\[P(\frac{{\|Y_{1}-P_{X_1} Y_1\|_F^{2}}}{{\|P_{X_1} Y_1\|_{F}^{2}}} \cdot
\frac{|\hat S_{11}|  - 1}{n-|\hat S_{11}|-1} \leqslant \gamma) \rightarrow 0. \]
\fi
\end{proof}

\begin{proof}[Proof of Theorem~\ref{thm 2}]
The joint estimation for the sparse block model is
\begin{equation}\label{eq solution 1}
\hat B_{\Delta}=\underset{\hat B_{\Delta_{kj}} = 0, (k,j) \in \hat J_0}{\operatorname{argmin}}\left\{\|Y-X B\|_{F}^{2}+\lambda \|B \|_1\right\}.
\end{equation}
The convergence of the solution to \eqref{eq solution 1} is determined by the convergence of the indicator and the lasso optimization, in which the former holds based on Theorem~\ref{thm selection consistency}.

Following the argument of the proof of Theorem \ref{thm lasso} and the proof of Lemma A.2 and Lemma A.3 of \citet{yyh2020two}, by requiring the restricted eigenvalue condition,
\[ v^\t(X^\t X/n) v \geqslant \kappa \|v\|^2_2,\]
we have that, with probability tending to 1 and a positive constant $M_2$
\[ \|\hat B - B \|^2_F \leqslant  \dfrac{M^2_2\cdot |S| \log P}{n}, \]
where $S = \cup S_{kj} $ and $P = \sum^K_{k=1} p_k$.
\end{proof}

\begin{proof}[Proof of Theorem~\ref{thm 3}]
Based on Theorem~\ref{thm 2} and the condition of the minimum absolute value of nonzero entries $B$ that
\[ B_{\min}  \geqslant M_3 \sqrt{|S| \log P /n}, \]
where $B_{\min} = \min\{ |\beta^{kj}_{ll'}| : \beta^{kj}_{ll'} \neq 0, k= 1,\dots,K, j = 1,\dots,J \}$
We have that,
\[P(\hat B \neq B) \rightarrow 0.\]
\end{proof}
\bibliographystyle{apalike}
\bibliography{reference}
\end{document}